\def\bargr{\mathop{{\overline{\rm gr}}}}
\def\rltwo{\mathop{\rho_{l^2}}}
\def\id{\mathop{\mathrm{id}}} 
\def\od{\mathop{\mathrm{od}}} 
\def\deg{\mathop{\mathrm{deg}}} 
\def\Ep{\mathbb{E}_{\{p\}}}
\def\Pp{\mathbb{P}_{\{p\}}}
\def\gr{\mathop{\mathrm{gr}}}
\def\br{\mathop{\mathrm{br}}}
\newtheorem{example}{Example}
\newtheorem{theorem}{Theorem}
\newtheorem{lemma}[theorem]{Lemma}
\newtheorem{corollary}[theorem]{Corollary}
\newtheorem{statement}[theorem]{Statement}
\begin{document}
\begin{frontmatter}

\title{Algebraic bounds for heterogeneous site percolation on
  directed and undirected graphs}

\author[ORNLaddress]{Kathleen E. Hamilton\corref{correspondingauthor}\fnref{ORNLdisclaimer}}
\ead{hamiltonke@ornl.gov}
\author[UCRaddress]{Leonid P. Pryadko}
\ead{leonid.pryadko@ucr.edu}

\cortext[correspondingauthor]{Corresponding author}
\fntext[ORNLdisclaimer]{This submission was written by the author
  acting in her own independent capacity and not on behalf of
  UT-Battelle, LLC, or its affiliates or successors.}

\address[ORNLaddress]{Quantum Computing Institute, Oak Ridge National
  Laboratory, Oak Ridge, Tennessee, 37821, USA}
\address[UCRaddress]{Department of Physics \& Astronomy, University of
  California, Riverside, California, 92521, USA}

\begin{abstract}
  We analyze site percolation on directed and undirected graphs with
  site-dependent open-site probabilities. We construct upper bounds on
  cluster susceptibilities, vertex connectivity functions, and the
  expected number of simple open cycles through a chosen arc; separate
  bounds are given on finite and infinite (di)graphs.  These produce
  lower bounds for percolation and uniqueness transitions in infinite
  (di)graphs, and for the formation of a giant component in finite
  (di)graphs.  The bounds are formulated in terms of appropriately
  weighted adjacency and non-backtracking (Hashimoto) matrices.  It
  turns out to be the uniqueness criterion that is most closely
  associated with an asymptotically vanishing probability of forming a
  giant strongly-connected component on a large finite (di)graph.
\end{abstract}

\end{frontmatter}

\section{Introduction}

We are currently living in an age where many scientific
and industrial applications rapidly generate large datasets. The
connectivity and underlying structure of this data is of great
interest. As a result, graph theory has enjoyed a resurgence, becoming
a prominent tool for describing complex connections in various kinds
of networks: social, biological,
technological\cite{Albert-Jeong-Barabasi-2000,
  Albert-Barabasi-RMP-2002,%
  Borner-Sanyal-Vspignani-ARIST-2007,%
  Danon-etal-2011,Costa-Oliveira-CorreaRocha-2011,%
  PastorSatorras-etal-RMP-2015,Radicchi-2015,Radicchi-interdep-2015},
and many others. Percolation on graphs has been used to describe
internet stability\cite{Cohen-Erez-benAvraham-Havlin-2000,%
  Callaway-Newman-Strogatz-Watts-2000}, spread of contagious
diseases\cite{Grassberger-1983,Moore-Newman-2000,Sander-epidemics-2002}
and computer viruses\cite{PastorSatorras-Vespignani-2001}; related
models describe market crashes\cite{Gai-Kapadia-2010} and viral spread
in social networks \cite{Watts-PNAS-2002,Kempe-Kleinberg-Tardos-2003,%
  Jiang-Miao-Yi-Zhenzhong-Hauptmann-2014}. General percolation theory
methods are increasingly used in quantum information
theory\cite{Delfosse-Zemor-2012,Kovalev-Pryadko-FT-2013,%
  PhysRevA.69.062311,PhysRevLett.99.130501,PhysRevLett.103.240503}.
Percolation is also an important phase transition in its own
right\cite{Kasteleyn-Fortuin-1969,Fortuin-Kasteleyn-1972,%
  Essam-1980,Wu-RMP-1982} and is well established in physics as an
approach for dealing with strong disorder: quantum or classical
transport\cite{Ambegaokar-Halperin-Langer-1971,Kirkpatrick-1973,%
  isichenko-rmp-1992}, bulk properties of composite
materials\cite{Bergman-Imry-1977,Nan-Shen-Ma-2010}, diluted magnetic
tran\-sitions\cite{Stinchcombe-1983}, or spin glass transitions%
\cite{Almeida-1978,Novak-1985,Arcangelis-1991,Coniglio-1994,%
  PhysRevLett.93.040401}.

Recently, we suggested\cite{Hamilton-Pryadko-PRL-2014} a lower bound
on the site percolation transition on an infinite 
graph,
\begin{equation}
  p_c\ge 1/\rho(H).
  \label{eq:threshold-H}
\end{equation}
Here,
$\rho(H)$ is the spectral radius of the non-backtracking (Hashimoto)
matrix\cite{Hashimoto-matrix-1989} $H$ associated with the graph.
This expression has been proved\cite{Hamilton-Pryadko-PRL-2014} for
infinite \emph{quasi-transitive} graphs, a graph-theoretic analog of
translationally-invariant systems with a finite number of inequivalent
sites.  The bound (\ref{eq:threshold-H}) is achieved on any infinite
quasi-transitive tree\cite{Hamilton-Pryadko-PRL-2014}, and it also
gives numerically exact percolation thresholds for several families of
large random graphs, as well as for some large empirical
graphs\cite{Karrer-Newman-Zdeborova-PRL-2014}.%

In most applications of percolation theory, one encounters large, but
finite, graphs.  The expectation is that the corresponding
crossover retains some properties of the transition in the infinite
graphs, e.g., the formation of large open clusters be unlikely if the
open site probability $p$ is well below $p_c$.  However, the bound
(\ref{eq:threshold-H}) tells nothing about the structure of the
percolating cluster on finite graphs, and neither it gives an
algorithm for computing the location of the crossover in the case of a
finite graph\cite{Radicchi-2015}.  In particular,
Eq.~(\ref{eq:threshold-H}) misses the mark entirely for any finite
tree where $\rho(H)=0$.

In this work, we construct several spectral and algebraic bounds for
transitions associated with heterogeneous site percolation on directed
and undirected graphs, both finite and infinite, and analyze the
continuity of these bounds for a sequence of finite digraphs weakly
convergent to an infinite graph.  Namely, for finite digraphs, we
construct explicit upper bounds for the local in-/out-/strong-cluster
susceptibilities (average size of a cluster connected to a given
site), the strong connectivity function (probability that a given pair
of sites belongs to the same strongly-connected cluster), and the
expected number of simple cycles passing through a given arc.  We also
construct some analogous bounds for infinite digraphs, which result in
non-trivial lower bounds for the transitions associated with divergent
in-/out-cluster susceptibilities, emergence of infinite
in-/out-clusters, and the strong-cluster uniqueness transition.

Our results imply that Eq.~(\ref{eq:threshold-H}) and its analogue for
heterogeneous site percolation on a general digraph give a universal
bound for the strong-cluster \emph{uniqueness} transition, below which
a strongly connected infinite cluster cannot be unique.  Such a bound
is continuous for an increasing sequence of subgraphs if the
percolation problem on the limiting digraph has a finite \emph{minimum
  return probability}, the probability that any arc and its inverse
are connected by an open non-backtracking path.  Finite minimum return
probability also guarantees that below this bound, the strong
connectivity decays exponentially with the distance, and the expected
size of a strongly connected cluster scales sublinearly with the
number of vertices in a digraph.  In comparison, the bound
(\ref{eq:threshold-H}) applies only conditionally to the percolation
transition proper, e.g., for a weakly-convergent sequence of
quasi-transitive digraphs of increasing size, where the number of
inequivalent vertex classes remains uniformly bounded.

The remainder of this paper is organized in four sections.  In
Sec.~\ref{sec:defs} we define several matrices associated with
heterogeneous site percolation and introduce other notations.  Our
main results are given in Sec.~\ref{sec:local-chi-bounds} which
contains bounds for finite digraphs, and in Sec.~\ref{sec:infinite}
where infinite digraphs are discussed.  Finally, in
Sec.~\ref{sec:disc} we compare effectiveness of different criteria in
limiting the emergence of a \emph{giant component}, an open cluster
which contains a finite fraction of all vertices in the digraph.

\section{Definitions and notations}
\label{sec:defs}

We consider only simple directed and undirected graphs with no loops
or multiple edges.  A general digraph
$\mathcal{D}=(\mathcal{V}, \mathcal{E})$ is specified by its sets of
vertices (also called sites)
$\mathcal{V}\equiv \mathcal{V}(\mathcal{D})$ and edges
$\mathcal{E}\equiv \mathcal{E}(\mathcal{D})$.  Each edge (bond) is a
pair of vertices, $(u,v)\subseteq\mathcal{E}$ which can be directed,
$u\to v$, or undirected, $u\leftrightarrow v$.  A directed edge
$u\to v$ is also called an arc from $u$ to $v$; an undirected
(symmetric) edge can be represented as a pair of mutually inverted
arcs, $u\leftrightarrow v\equiv\{u\to v,v\to u\}$.  A digraph with no
undirected edges is an oriented graph.  We will denote the set of arcs
in a (di)graph $\mathcal{D}$ as
$\mathcal{A}\equiv \mathcal{A}(\mathcal{D})$.  Each vertex
$v\in\mathcal{V}$ in a digraph $\mathcal{D}$ is characterized by its
in-degree $\id(v)$ and out-degree $\od(v)$, the number of arcs in
$\mathcal{A}(\mathcal{D})$ to and from $v$, respectively.  A digraph
with no directed edges is an undirected graph
$\mathcal{G}=(\mathcal{V},\mathcal{E})$.  For every vertex in an
undirected graph, the degree is the number of bonds that include $v$,
$\deg(v)=\id(v)=\od(v)$.

We say that vertex $u$ is connected to vertex $v$ on a digraph
$\mathcal{D}$, if there is a path from
$u=u_0$ to $v\equiv u_\ell$, 
\begin{equation}
  \label{eq:path}
\mathcal{P}\equiv \{u_0\rightarrow u_1,  u_1\rightarrow
u_2,\ldots, u_{\ell-1} 
  \rightarrow u_\ell\}\subseteq \mathcal{A}(\mathcal{D}).
\end{equation}
The path is called non-backtracking if $u_{i-1}\neq u_{i+1}$,
$0< i<\ell$, and self-avoiding (simple) if $u_i\neq u_j$ for
$0\le i,j\le \ell$.  The length of the path is the number of arcs in
the set, $\ell=|\mathcal{P}|$.  The distance from $u$ to $v$ on
$\mathcal{D}$, $d(u,v)$, is the minimum length of a path from $u$ to
$v$.  We call path (\ref{eq:path}) open if $u_0\neq u_\ell$, and
closed otherwise.  A closed path is a cycle; it can be
non-backtracking or self-avoiding (simple).  Connectivity on an
undirected graph is a symmetric relation: we just say that vertices
$u$ and $v$ are connected (or not).  On a digraph, we say that
vertices $u$ and $v$ are strongly connected iff $u$ is connected to
$v$ and $v$ is connected to $u$; $u$ and $v$ are weakly connected on
${\cal D}$ if they are connected on the undirected graph underlying
${\cal D}$.  A \emph{ray} is a semi-infinite simple path,
characterized as in- or out-going according to the directionality of
the constituent arcs.  A \emph{strong ray} is a strongly connected
union of in- and out-going rays; it has the property that the
intersection between the vertex sets is an infinite set.

A digraph ${\cal D}$ is called \emph{transitive} iff for any two
vertices $u$, $v$ in ${\cal V}\equiv \mathcal{V}(\mathcal{D})$ there
is an automorphism (symmetry) of ${\cal D}$ mapping $u$ onto $v$.
Digraph ${\cal D}$ is called \emph{quasi-transitive} if there is a
finite set of vertices ${\cal V}_0\subset {\cal V}$ such that any
$u\in {\cal V}$ is taken into ${\cal V}_0$ by some automorphism of
${\cal D}$.  We say that any vertex which can be mapped onto a vertex
$u_0\in{\cal V}_0$ is in the equivalence class of $u_0$.  The square
lattice is an example of a transitive graph; a two-dimensional lattice
with $r$ inequivalent vertex classes defines a (planar)
quasi-transitive graph.

A graph $\mathcal{G}'=(\mathcal{V}',\mathcal{E}')$ is called a
covering graph of $\mathcal{G}=(\mathcal{V},\mathcal{E})$ if is there
is a function $f:\mathcal{V}'\to \mathcal{V}$, such that an edge
$(u', v')\in\mathcal{E}'$ is mapped to the edge
$(f(u'), f(v'))\in\mathcal{E}$, with an additional property that $f$
be invertible in the vicinity of each vertex, i.e., for a given vertex
$u'\in \mathcal{V}'$ and an edge $(f(u'), v)\in\mathcal{E}$, there
must be a unique edge $(u', v')\in\mathcal{E}'$ such that $f(v')=v$.
The {\em universal cover\/} $\widetilde{\mathcal{G}}$ of a connected
graph $\mathcal{G}$ is a connected covering graph which has no cycles
(a tree); it is unique, up to isomorphisms.  The universal cover can
be constructed as a graph with the vertex set formed by all distinct
non-backtracking paths from a fixed origin $v_0\in \mathcal{V}$, with
an edge $(\mathcal{P}_1,\mathcal{P}_2)\in\widetilde{\mathcal{E}}$ if
$\mathcal{P}_2=\mathcal{P}_1 \cup u$, $u\in\mathcal{E}$ is a simple
extension of $\mathcal{P}_1$.  Choosing a different origin gives an
isomorphic graph.  The definition of a covering digraph is similar,
except the mapping function $f$ must preserve the directionality of
the edges.  The covering digraph $\widetilde{\mathcal{D}}$ of a
digraph $\mathcal{D}$ can be constructed from that of the underlying
undirected graph by labeling the directionality of the corresponding
edges.

\subsection{Heterogeneous site percolation}
\label{subsec: heterogeneous site percolation}

Consider a connected undirected graph $\mathcal{G}$.  We define
heterogeneous site percolation on $\mathcal{G}$ where each vertex
$v\in\mathcal{V}(\mathcal{G})$ has an associated probability $p_v$,
$0<p_v\le 1$. A vertex is chosen to be open with probability $p_v$,
independent from other vertices.  We are focusing on a subgraph
$\mathcal{G}'\subseteq\mathcal{G}$ induced by all open vertices on
$\mathcal{G}$.  For each vertex $v$, if $v$ is open, let
$\mathcal{C}(v)\subseteq\mathcal{G}'$ be the connected component of
$\mathcal{G}'$ which contains the vertex $v$, otherwise
$\mathcal{C}(v)=\emptyset$.  If $\mathcal{C}(v)$ is infinite, for some
$v$, we say that percolation occurs.  Denote
\begin{equation}
\theta_v\equiv
\theta_v(\mathcal{G},\{p\})=\Pp(|\mathcal{C}(v)|=\infty)
\label{eq:theta-v}
\end{equation}
the probability that $\mathcal{C}(v)$ is
infinite.  Clearly, for any pair of vertices $u$ and $v$,
$\theta_v>0$ iff $\theta_u>0$. 

Similarly, introduce  the connectivity function,
\begin{equation}
  \label{eq:connectivity}
  \tau_{u,v}\equiv
  \tau_{u,v}(\mathcal{G},\{p\})=\Pp\bigl(u\in
  \mathcal{C}(v)\bigr),   
\end{equation}
the probability that vertices $u$ and $v$ are in the same cluster.
For a pair of vertices $u$, $v$ separated by the distance $d(u,v)$, 
$\tau_{u,v}$ can be bounded by the probability that $v$ is in a
cluster of size $d(u,v)+1$.  Thus, in the absence of percolation,
$\tau_{u,v}\to0$ when $d(u,v)\to\infty$.  The reverse is not
necessarily true. 

Yet another measure is  the local susceptibility,
\begin{equation}
\chi_v\equiv
\chi_v(\mathcal{G},\{p\})=\Ep(|\mathcal{C}(v)|), 
\label{eq:chi-v}
\end{equation}
the expected cluster size connected to $v$.  
Equivalently, local susceptibility can be defined as the sum of
probabilities that individual vertices are in the same cluster as $v$,
i.e., in terms of
connectivities, 
\begin{equation}
  \label{eq:chi-v-alt}
  \chi_v=\sum_{u\in{\cal V}}\tau_{v,u}.
\end{equation}
If percolation occurs
(i.e., with probability $\theta_v>0$, $|C_v|=\infty$), then
clearly $\chi_v=\infty$.  The reverse is known to be true in the case
of heterogeneous site percolation on quasi-transitive
graphs\cite{Menshikov-1986,Menshikov-Sidorenko-eng-1987}:
$\chi_v=\infty$ can only happen inside or on the boundary of the
percolating phase.

An important question is the number of infinite clusters on
${\cal G}'$, in particular, whether an infinite cluster is unique.
For infinite quasi-transitive graphs, there are only three
possibilities: (\textbf{a}) almost surely there are no infinite
clusters; (\textbf{b}) there are infinitely many infinite clusters;
and (\textbf{c}) there is only one infinite
cluster\cite{Benjamini-Schramm-1996,%
  Haggstrom-Jonasson-2006,Hofstad-2010}.  This is not necessarily so
for more general graphs.  Notice that when the infinite cluster is
unique, the connectivity function is bounded away from zero,
$\tau_{u,v}\ge\theta_u\theta_v>0$.  In addition, uniqueness of the
infinite cluster implies divergence of the local self-avoiding cycle
(SAC) susceptibility $\chi_\mathrm{SAC}(a)$, the expected number of
distinct simple cycles passing through the arc $a$ on the open
subgraph $\mathcal{D}'$.  In the case of homogeneous percolation on
undirected transitive graphs, such a relation is given by Theorem 3.9
in Ref.~\cite{Lyons-review-2000}, attributed to O.\ Schramm.

When the open-site probabilities are equal for all sites of an
infinite graph, $p_v=p$, $v\in{\cal V}$ (this is homogeneous site
percolation), one defines the critical probabilities $p_c$ and $p_T$,
respectively associated with formation of an infinite cluster and
divergence of site susceptibilities.  There is no percolation,
$\theta_v=0$, for $p<p_c$, but $\theta_v > 0$ for
$p>p_c$. Likewise, $\chi_v$ is finite for $p<p_T$ but not for
$p>p_T$. A third critical probability, $p_u$, is associated with the
number of infinite clusters.  Most generally, we expect
$p_T\le p_c\le p_u$.  For a quasi-transitive graph, one
has\cite{Hofstad-2010}
\begin{equation}
  \label{eq:thresholds}
  0<p_T=p_c\le p_u.
\end{equation}
Here, $p_u$ is the uniqueness threshold, such that there can be only
one infinite cluster for $p>p_u$, whereas for $p<p_u$, the number of
infinite clusters may be zero, or infinite. For a degree-$r$ regular tree
$\mathcal{T}_r$ with $r\ge3$, $p_u=1$, $p_c=1/(r-1)$, while for
hypercubic lattice, $\mathbb{Z}^D$, $p_u=p_c$.

\subsection{Percolation on a general digraph}
\label{subsec: Percolation on a digraph}

There are several notions of connectivity on a digraph, and,
similarly, there are several percolation transitions associated with a
digraph\cite{Restrepo-Ott-Hunt-2008}.  For any given configuration of
open vertices on a digraph $\mathcal{D}$ (which induce the open
digraph $\mathcal{D}'$) we introduce the strongly-connected cluster
which includes $v$,
$\mathcal{C}_{\mathrm{str}}(v)\subseteq \mathcal{D}'$.  Similarly, one
also considers an out-cluster
$\mathcal{C}_\mathrm{out}(v)\subseteq\mathcal{D}'$ and an in-cluster
$\mathcal{C}_\mathrm{in}(v)\subseteq\mathcal{D}'$, formed by all sites
which can be reached from $v$ moving along or opposite the arcs in
$\mathcal{A}(\mathcal{D})$, respectively.  Finally, there is also a
\emph{weakly-connected} cluster $\mathcal{C}_\mathrm{und}(v)$ formed
on the undirected graph $\mathcal{G}'$ underlying $\mathcal{D}'$.  For
each of these cluster types, we introduce the quantities analogous to
those in Eqs.~(\ref{eq:theta-v}), (\ref{eq:chi-v}), and
(\ref{eq:connectivity}), e.g., the probability
$\theta_\mathrm{str}(v)$ that $v$ is in an infinite strongly-connected
cluster, the strongly-connected susceptibility $\chi_\mathrm{str}(v)$,
the two sided (strong) connectivity $\tau_\mathrm{str}({u,v})$ which implies a
path from $u$ to $v$ and one from $v$ to $u$ must both be
open, and the directed connectivity $\tau_{u,v}$ from $u$ to $v$.

\subsection{Emergence of a giant component}

In network theory, a percolation-like transition on a finite graph is
usually associated with the emergence of a \emph{giant component}, an
open cluster which contains a finite fraction of all vertices in a
graph.  The transition is sharp and it is well understood in various
ensembles of random graphs and digraphs, see, e.g.,
Refs.~\cite{Bollobas-giant-2001,Alon-Benjamini-Stacey-2004,%
  Newman-Strogatz-Watts-2001,Chung-Horn-Lu-2009,%
  Bollobas-Borgs-Chayes-Riordan-2010,%
  Benjamini-Boucheron-Lugosi-Rossignol-2012}.

\subsection{Matrices associated with a digraph}
\label{subsec: Matrices associated with a graph}

For any heterogeneous site percolation problem on a digraph with the
adjacency matrix $A$, we associate the following three matrices: \emph{weighted ad\-ja\-cen\-cy matrix} 
\begin{equation}
  \label{eq:Ap}
  [A_p]_{ij}= p_i^{1/2}A_{ij} p_j^{1/2}
 \quad(\text{no summation}),
\end{equation}
weighted line digraph adjacency matrix $L_p$, and
weighted Hashimoto matrix $H_p$. 

\paragraph{Weighted line-digraph adjacency matrix}
\label{subsec: weighted line-digraph adjacency matrix}

For any digraph
$\mathcal{D}$, the line digraph\cite{Harary-Norman-1960}
$\mathcal{L}\equiv \mathcal{L}(\mathcal{D})$ is a digraph whose
vertices correspond to the arcs of the original digraph
$\mathcal{V}(\mathcal{L})=\mathcal{A}(\mathcal{D})$, and it has a
directed edge $(a\to b)\in\mathcal{A}(\mathcal{L})$ between vertices
$a=i\to j$ and $b=j'\to l$ iff $j=j'$ (that is, arcs $a$ and $b$,
taken in this order, form a directed path of length two).  We denote
the corresponding adjacency matrix $L$, and introduce the weighted
matrix $L_p$, where an entry corresponding to the directed edge $(a\to
b)\in\mathcal{A}(\mathcal{L})$ ($a=i\to j$ and $b=j'\to l$ are arcs in
$\mathcal{D}$) has weight $p_j$:
\begin{equation}
  \label{eq:Lp}
  (L_p)_{ab}=p_j\delta_{j,j'}.\quad  \text{(no summation)}
\end{equation}
In the homogeneous case, $p_j=p$ for all sites, and the weighted
matrix has the simple form, $L_p=pL$.
Notice we used the arc set $\mathcal{A}$ to define the line digraph;
the same definition can be used to associate a line digraph
$\mathcal{L}(\mathcal{G})$ with an undirected graph $\mathcal{G}$.

\paragraph{Weighted Hashimoto matrix}
\label{subsec: weighted Hashimoto matrix}

Hashimoto, or non-back\-tracking,
matrix $H$ has originally been defined for counting non-back\-tracking
cycles on graphs\cite{Hashimoto-matrix-1989}.  This matrix is the
adjacency matrix of the \emph{oriented line graph} (OLG)
$\mathcal{H}(\mathcal{D})$ associated with the original
(di)graph\cite{Kotani-Sunada-2000}.  The OLG is defined similarly to
the line digraph, except that the edges corresponding to mutually
inverted pairs of arcs in the original digraph are dropped.  We define
the corresponding weighted matrix by analogy with Eq.~(\ref{eq:Lp}),
\begin{equation}
  \label{eq:Hp}
  (H_p)_{ab}=p_j\delta_{j,j'}(1-\delta_{i,l}),\quad  \text{(no summation)}  
\end{equation}
where $a=i\to j$ and $b=j'\to l$ are arcs in the original digraph.
Again, in the homogeneous case, $p_j=p$ for all sites, we recover the usual
Hashimoto matrix, $H_p=pH$.

Notice that in the case of an infinite digraph, the objects $A_p$,
$L_p$, and $H_p$ are not matrices but operators acting in the
appropriate infinite-dimensional vector spaces.  For a locally-finite
digraph, the action of these operators is uniquely defined,
respectively, by the local rules (\ref{eq:Ap}), (\ref{eq:Lp}), and
(\ref{eq:Hp}).  For convenience we will nevertheless refer the them as
``matrices'', each time specifying whether the graph is finite or
infinite.

\subsection{Perron-Frobenius theory}
\label{subsec: Perron-Frobenius theory}

Consider a square $n\times n$ matrix $B$ with non-negative matrix
elements, $B_{ij}\ge0$, and not necessarily symmetric.  The spectral
radius $\rho(B)\equiv \max_{i\le n}|\lambda_i(B)|$ and the associated
eigenvectors of such a matrix are analyzed in the Perron-Frobenius
theory of non-negative
matrices\cite{Perron-1907,Frobenius-1912,Meyer-book-2000}.  In
particular, there is always an eigenvalue
$\lambda_\mathrm{max}=\rho(B)$, and the corresponding left and right
eigenvectors $\xi_L$ and $\xi_R$, $\xi_LB=\lambda_\mathrm{max}\xi_L$,
$B\xi_R=\lambda_\mathrm{max}\xi_R$, can be chosen to have non-negative
components, $\xi_{Li}\ge0$, $\xi_{Ri}\ge0$, although in general one
could have $\rho(B)=0$.  Further, in the case where $B$ is strongly
connected (as determined by a digraph with the adjacency matrix given
by non-zero elements of $B$), the spectral radius is strictly
positive, as are the components of $\xi_L$, $\xi_R$.  For such a
positive vector $\xi$, we will consider the \emph{height ratio}
\begin{equation}
  \gamma(\xi)\equiv \max_{ij}\frac{\xi_i}{\xi_j},\quad \gamma(\xi)\ge1.
  \label{eq:ratio}
\end{equation}

\section{Finite graph bounds}
\label{sec:local-chi-bounds}
\subsection{Approach}
The derivation of Eq.~(\ref{eq:threshold-H}) in Ref.\
\cite{Hamilton-Pryadko-PRL-2014} relied on the mapping of the
percolation thresholds between the original graph $\mathcal{G}$ and
its universal cover, a tree $\mathcal{T}$ locally equivalent to
$\mathcal{G}$, $p_c(\mathcal{G})\ge p_c(\mathcal{T})$.  Our approach
in this work is mostly algebraic.  We
consider 
the bound (\ref{eq:threshold-H}) as the convergence radius for the
infinite power
series\cite{
  Karrer-Newman-Zdeborova-arXiv-2014},
\begin{equation}
 M\equiv M(H)\equiv  \sum_{s=1}^{\infty}p^s H^s,
 \label{eq:series-H}
\end{equation}
where a matrix element of $H^s$, $[H^s]_{uv}$, gives the number of
non-backtracking paths starting at site $i$ along the arc
$a\equiv i\to j$, ending at the arc $b$, and visiting $s-1$
intermediate sites.  Thus, the sum $\sum_{b}M_{ab}$ is an upper bound
on the average number of sites which can be reached starting along the
arc $a$.  Respectively, in an infinite graph, the convergence of the
series (\ref{eq:series-H}) implies: with probability one any given
point belongs to a finite cluster.  Unfortunately, this argument does
not limit giant components on a finite graph.  Indeed, matrix $H$ is
non-symmetric, and convergence of the series can be highly non-uniform
in $s$ (see, e.g., Ref.~\cite{Bandtlow-2004}), with the norm of each
term exponentially increasing as $p^s\| H^s\| \sim p^s\|H\|^s$ for
$s<s_0=\mathcal{O}(m)$, and only starting to decrease for
$p<1/\rho(H)$ at $s\ge s_0$.  Thus, formal convergence does not
guarantee the low probability of finding a giant component in a large but
finite graph.

On the other hand, the expansion (\ref{eq:series-H}) gives a
convenient tool to study percolation.  By eliminating the contribution
of backtracking paths, and reducing over-counting, one can get bounds
tighter than what would be possible in a similar approach based on the
adjacency matrix.  Now, the original problem of
percolation on an undirected graph is substituted by the problem of
percolation on a directed graph, the OLG whose adjacency matrix is the
non-backtracking matrix $H$.  The same formalism can be used to bound
percolation on digraphs, a problem of high importance in network
theory\cite{Grassberger-1983,Moore-Newman-2000,
  Sander-epidemics-2002,PastorSatorras-Vespignani-2001,
  Gai-Kapadia-2010,Watts-PNAS-2002,Kempe-Kleinberg-Tardos-2003,%
  Jiang-Miao-Yi-Zhenzhong-Hauptmann-2014}. 

\subsection{Spectral bounds on susceptibilities}
\label{sec:upper-spec-bounds}

In the following, we only construct bounds for the out-cluster
susceptibilities.  The corresponding bounds for in-cluster
susceptibilities, $\chi_\mathrm{in}(v)$, can be obtained by
considering the transposed matrices, $A_p^T$ and $H_p^T$,
respectively.

\begin{theorem}
\label{th:chi-out-bound-A}
Consider heterogeneous site percolation on a finite strongly-connected
digraph $\mathcal{D}$.  Assume that the spectral radius of the
weighted adjacency matrix satisfies $\rho(A_p)<1$, and let $ \xi_R$ be the
corresponding right PF vector.
The out-cluster susceptibility for an arbitrary vertex
$v\in\mathcal{V}(\mathcal{D})$ satisfies:%
\begin{equation}
\chi_\mathrm{out}(v)\le {C_1(\xi_R)\over 1-\rho(A_p)},\;\;
C_1(\xi)\equiv \max_{u\subset\mathcal{V}}\max_{v\subset\mathcal{V}}{p_u^{1/2} \xi_u\over
  \xi_v/p_v^{1/2}}\le \gamma(\xi).
\label{eq:chi-out-bound}
\end{equation}  
\end{theorem}

\begin{proof}
Consider the
  alternative definition (\ref{eq:chi-v-alt}) of the susceptibility
  $\chi_v$ as a sum of connectivities.  Any given site $u$ is in the
  out-cluster of $v$ iff there is an open path leading from $v$ to
  $u$.  The corresponding probability can be estimated using the union
  bound, with sum of probabilities over self-avoiding paths
  upper-bounded by matrix elements of powers of the matrix $A_p$.
  Namely, the upper bound for the susceptibility
  $\chi_\mathrm{out}(v)$ reads:
\begin{equation}
\label{eq:out-bound-sum}
\chi_\mathrm{out}(v)\le {p_v^{1/2}}\sum_{s=0}^\infty
\sum_{u\in\mathcal{V}(\mathcal{D})}
[(A_p)^s]_{vu}{p_u^{1/2}}. 
\end{equation}
For each $u$, we replace $p_u^{1/2}\le\xi_{Ru}/\min_i(
\xi_{Ri}/p_i^{1/2})$, which reduces powers of $A_p$ to those of $\rho(A_p)$,
thus
\begin{equation}
\label{eq:2}
\chi_\mathrm{out}(v)\le 
[1-\rho(A_p)]^{-1} {p_v^{1/2}\xi_{Rv}\over \min_i \xi_{Ri}/p_i^{1/2}}. 
\end{equation}
The uniform bound (\ref{eq:chi-out-bound}) is obtained by 
maximizing over $v$.
\end{proof}

In the case of an undirected graph, or a digraph with some undirected
edges, we can try to construct a better bound by considering only
non-backtracking paths, with the corresponding probabilities counted
using the weighted Hashimoto matrix $H_p$.  The argument is simplest
when the OLG of the original graph is also
strongly connected.  We have
\begin{theorem}
\label{th:chi-out-bound-H} 
Consider heterogeneous site percolation on a finite digraph
$\mathcal{D}$ with a strongly connected OLG.  Assume that the spectral radius of the
weighted Hashimoto matrix satisfies $\rho(H_p)<1$, and let $ \eta_R$ be the
corresponding right PF vector.
  The local out cluster susceptibility
satisfies:%
\begin{equation}
\label{eq:chi-bound-H}
\chi_\mathrm{out}(v)\le p_\mathrm{max}+{C_2(\eta_R)\over 1-\rho(H_p)},\;\,
C_2(\eta)\equiv{\max_{u}p_u\sum_i \eta_{u\to i}\over \min_{b\equiv
    u'\to j}\eta_{b}/p_j}.
\end{equation}
\end{theorem}
The proof is analogous to that of Theorem~\ref{th:chi-out-bound-A},
except that the sum (\ref{eq:out-bound-sum}) is replaced by a similar
sum in terms of the weighted Hashimoto matrix, with an additional
summation over all arcs leaving a chosen vertex $v$.  We see that 
$C_2(\eta)\le \od_\mathrm{max}\gamma(\eta)$, where $\od_\mathrm{max}$
is the maximum out degree of a vertex on $\mathcal{D}$.

One possible set of sufficient conditions for an OLG  to be strongly
connected  is given by the following Lemma:
\begin{lemma}
\label{th:olg-strongly-connected} 
Consider a strongly-connected digraph $\mathcal{D}$.  The
corresponding OLG $\mathcal{H}(\mathcal{D})$ is also strongly
connected if either of the following is true: (\textbf{a})
$\mathcal{D}$ has no undirected edges, or (\textbf{b}) ${\cal D}$
remains strongly connected after any undirected edge
$i\leftrightarrow j\in\mathcal{E}(\mathcal{D})$ is replaced by either
of the two arcs, $a\equiv i\to j$ or $\bar a\equiv j\to i$.
\end{lemma}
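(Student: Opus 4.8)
The plan is to unwind what strong connectivity of $\mathcal{H}(\mathcal{D})$ actually asserts. Since the vertices of the OLG are the arcs $\mathcal{A}(\mathcal{D})$ and its directed edges are exactly the admissible transitions between two consecutive arcs (adjacency in the line digraph, minus the mutually inverted pairs), a directed path in $\mathcal{H}(\mathcal{D})$ from an arc $a$ to an arc $b$ is precisely a non-backtracking walk in $\mathcal{D}$ that begins with $a$ and ends with $b$. So the whole Lemma reduces to the single claim: for every ordered pair of arcs $a=i\to j$ and $b=k\to l$ there exists a non-backtracking walk in $\mathcal{D}$ starting with $a$ and ending with $b$. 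I would prove the two cases separately against this reformulation.

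Case (\textbf{a}) I expect to be immediate. If $\mathcal{D}$ has no symmetric bonds then no arc $c$ has its inverse $\bar c$ present, so no line-digraph edge is ever dropped and $\mathcal{H}(\mathcal{D})$ coincides with the full line digraph $\mathcal{L}(\mathcal{D})$. Strong connectivity of $\mathcal{D}$ supplies a directed walk from $j$ to $k$; prepending $a$ and appending $b$ gives a walk $i\to j\to\cdots\to k\to l$ whose consecutive arcs are adjacent in $\mathcal{L}(\mathcal{D})$, and no transition in it can backtrack, since a backtrack would require some arc together with its inverse, i.e.\ a symmetric bond, which is absent by hypothesis.

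For case (\textbf{b}) I would again begin from the concatenation $aWb$, where $W$ is a directed walk from $j$ to $k$ given by strong connectivity, and then excise the backtracking steps one at a time. The only way a transition can be forced to backtrack is at a vertex $v$ entered along an arc $u\to v$ and left along its inverse $v\to u$, which can happen only across a symmetric bond $u\leftrightarrow v$. This is exactly where hypothesis (\textbf{b}) enters: deleting the single arc $v\to u$ leaves $\mathcal{D}$ strongly connected, so there is a directed detour from $v$ back to $u$ that never uses $v\to u$; its first step $v\to w$ has $w\neq u$, so splicing it in place of the step $v\to u$ repairs that backtrack. I would organize this as an induction on a suitable potential (the number of backtracking transitions, say) and apply the same bond-deletion device at the two ends, so as to arrive at $k$ along an arc different from $l\to k$ and thereby make the closing step $b$ admissible.

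The main obstacle, and the step I would handle most carefully, is the interaction between eliminating backtracking and being required to finish at the prescribed arc $b$ rather than merely at the vertex $k$: each spliced detour can itself create a fresh backtrack at its two junctions, so I must show the splices can be chosen to strictly reduce the chosen potential and hence terminate, rather than regress indefinitely. Hypothesis (\textbf{b}) furnishes a local alternative outgoing arc at every vertex reached through a symmetric bond, but the delicate part is proving that this local freedom assembles into a global route that actually steers onto $b$; controlling the very first and very last transitions is the natural base case of the induction, and verifying that these endpoint constraints are always simultaneously satisfiable is where I expect the real work to lie.
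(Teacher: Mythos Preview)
Your approach is essentially the paper's: concatenate $a$, a directed route from the head of $a$ to the tail of $b$, and $b$; then use hypothesis~(\textbf{b}) to reroute any step that would backtrack; then reduce. The one economy in the paper you might adopt is to take the middle route to be a self-avoiding directed \emph{path} $\mathcal{P}$ rather than a generic walk. With $\mathcal{P}$ self-avoiding, the concatenation $a\,\mathcal{P}\,b$ can backtrack only at the two junctions, so hypothesis~(\textbf{b}) is invoked at most twice---once to replace a first step $j\to i$ by a $j$-to-$i$ path that avoids that arc, and symmetrically at the other end---rather than in an open-ended induction over all internal backtracks. The paper then finishes with a single ``remove any backtracking portions'' pass on the resulting walk. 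The concern you flag---that spliced detours may themselves create fresh backtracks at their seams, and that iterated reduction might eat into the prescribed first arc $a$ or last arc $b$---is exactly the content the paper compresses into that closing phrase without further justification; you are right that this is where the care is needed.
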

\begin{proof}
  To connect the arcs $a=i\to j$ and $b=u\to v$ from
  $\mathcal{A}(\mathcal{D})$, take a directed path ${\cal P}=\lbrace  j\to
  j_1, j_1\to j_2,\ldots,j_{m-1}\to u\}$ from $j$ to $u$.  If
  $j_1=i$ [in the case the inverse of $a$ is also in the arc set,
  $\bar a=j\to i\in \mathcal{A}(\mathcal{D})$], replace the first step
  by a directed path from $j$ to $i$ which does not include the arc
  $j\to i$.  If needed, do the same at the other end, and remove any
  backtracking portions in the resulting directed path.
\end{proof}
In the special case of a symmetric digraph $\mathcal{D}$
corresponding to an undirected graph $\mathcal{G}$, the condition
(\textbf{b}) in Lemma \ref{th:olg-strongly-connected} is equivalent to
$\mathcal{G}$ having no leaves\cite{Hamilton-Pryadko-PRL-2014}
(degree-one vertices).

For completeness, we also establish the relation between
the spectral radii of the matrices $A_p$ and $H_p$ (this is an
extended version of Theorem 1 from Ref.\
\cite{Hamilton-Pryadko-PRL-2014}; 
see also Ref.~\cite{Karrer-Newman-Zdeborova-PRL-2014}):
\begin{statement}
  \label{th:spectral-radii-relation}
  (\textbf{a}) The spectral radii of the matrices
  (\ref{eq:Ap}), (\ref{eq:Lp}), and (\ref{eq:Hp}) corresponding to
  heterogeneous site percolation on a finite strongly-connected digraph
  $\mathcal{D}$ satisfy $\rho(A_p)=\rho(L_p)\ge \rho(H_p)$.
  (\textbf{b})~If $\mathcal{D}$ has no undirected edges, all three
  spectral radii are equal.  (\textbf{c})~If the OLG of a finite digraph
  $\mathcal{D}$ is strongly connected and $\rho(H_p)=\rho(A_p)$, then
  $\mathcal{D}$ has no undirected edges.
\end{statement}
\begin{proof} 
  Consider a right PF eigenvector of the matrix $H_p$ with
  non-negative components, $\eta_{a}\ge0$, where $a\equiv i\to j$ goes
  over all arcs in the digraph $\mathcal{D}$.  Such an eigenvector can
  always be constructed, whether or not the corresponding graph, OLG
  of $\mathcal{D}$, is strongly
  connected\cite{Perron-1907,Frobenius-1912,Meyer-book-2000}.
  According to the definition~(\ref{eq:Hp}), the corresponding
  equations are%
  \begin{equation}
  \lambda\eta_{i\to j}=p_j\sum_{l:j\to
    l\in\mathcal{A}(\mathcal{D})}\eta_{j\to l}-p_j\eta_{j\to
    i},\label{eq:EV}
\end{equation}
where $\lambda\equiv\rho(H_p)\ge0$. The last term in Eq.~(\ref{eq:EV})
removes the backtracking contribution; it is present iff the edge
$(i,j)\in\mathcal{E}(\mathcal{D})$ is an undirected edge
$(i\leftrightarrow j)$.  Denote $x_j\equiv \sqrt{p_j}\sum_{l:j\to
  l}\eta_{j\to l}$ and $y_i=\sqrt{p_i}\sum_{j:i\leftrightarrow
  j}p_j\eta_{j\to i}\ge0$, where in the latter case we are only
summing over the neighbors $j$ connected to $i$ by undirected edges;
$y_i=0$ if there are no undirected edges incident on $i$.  In
Eq.~(\ref{eq:EV}), fix $i$, sum over $j$, and multiply by
$\sqrt{p_i}$; this gives%
\begin{equation}
  \label{eq:EV-Ap}
  \lambda  x_i=\sum_{j}\sqrt{p_i}A_{ij}\sqrt{p_j}\,x_j-y_i,
\end{equation}
or just $ \lambda x=A_px-y$.  Denote $\xi$ a left PF vector of $A_p$,
such that $\xi^T A_p=\rho(A_p)\xi^T$.  By assumption, the original
digraph is strongly connected, and all $p_i>0$; therefore vector $\xi$
is unique, up to a normalization factor, which can be chosen to make
components of $\xi$ all positive.  Multiply the derived Eq.\
(\ref{eq:EV-Ap}) by $\xi_i$ and sum over
$i\in\mathcal{V}(\mathcal{D})$; this gives
$$
\lambda\, \xi^T x=\xi^T A_p x-\xi^T y=\rho(A_p)\xi^T x-\xi^T y\le
\rho(A_p)\xi^T x.
$$
Notice that $\xi_i>0$ while $x_i\ge0$ and vector $x\neq0$; thus
$\xi^T x>0$, which proves $\rho(A_p)\ge \lambda\equiv \rho(H_p)$.  The
same argument for the line digraph adjacency matrix $L_p$ produces
identical equations with $y\to0$; this gives\cite{Harary-Norman-1960}
$\rho(L_p)=\rho(A_p)$ and completes the proof of (\textbf{a}).
Similarly, the second term in the RHS of Eq.~(\ref{eq:EV-Ap}) is
absent in the case where $\mathcal{D}$ is an oriented graph with no
undirected edges (in this case any path is automatically
non-backtracking); this proves (\textbf{b}).  Further, if the OLG
${\cal H}$ is strongly connected, we have $\eta_a>0$, the components of the
vector $x$ are all positive, $x_i>0$.  In this case,
$\rho(A_p)>\rho(H_p)$ unless $\mathcal{D}$ has no undirected edges,
which proves (\textbf{c}).
\end{proof}

Since Theorem \ref{th:chi-out-bound-H} was derived by counting only
non-back\-tracking paths, one could expect the corresponding bound to
be tighter whenever some undirected edges are present in
$\mathcal{E}(\mathcal{D})$.  Indeed, in this case
$\rho(H_p)< \rho(A_p)$, and, assuming $\rho(A_p)<1$, the denominator
in Eq.~(\ref{eq:chi-bound-H}) 
is larger than that in Eq.~(\ref{eq:chi-out-bound}).  However, for
sufficiently small $\rho(A_p)$, the relation between the two bounds
is determined by the coefficients $C_1(\xi_R)$ and $C_2(\eta_R)$, and
these depend strongly on the graph.  In particular, in the case of
homogeneous percolation with on-site probability $p$ on a $d$-regular
undirected graph, $C_1(\xi_R)=p$ while $C_2(\eta_R)=d p^2$, so that,
indeed, the second bound is tighter.  On the other hand, $C_2(\eta)$
can be infinite when the OLG is not strongly connected---e.g., when
the graph contains a degree-one vertex.  
[Notice that the
structure of the coefficients $C_1(\xi)$ and $C_2(\eta)$ ensure that
they remain bounded when one of the on-site probabilities $p_v$
becomes very small, as long as the graph remains well connected after
the removal of the vertex $v$ and incident edges.]

Finally, in the case of a strongly-connected oriented graph with no
undirected edges, where Statement \ref{th:spectral-radii-relation}
gives $\lambda\equiv\rho(A_p)=\rho(H_p)$, one  gets
$C_2(\eta_R)=\lambda C_1(\xi_R)$, so that the bounds in Theorems
\ref{th:chi-out-bound-A} and \ref{th:chi-out-bound-H} become almost
identical, as expected.

In the following subsection, we give several bounds on vertex
connectivities.  An exponential decay of connectivity implies a
sublinear scaling of the expected size of an open cluster with
$n\equiv |\mathcal{V}|$, and, by the Markov's inequality, an
asymptotically zero probability of a giant open component which
contains $c n$ or more vertices, with any $c>0$.  This can be seen
from the following
\begin{lemma}
\label{th:exp-decay-susceptibility}
  Consider heterogeneous percolation on a digraph $\mathcal{D}$ with
  $n=|\mathcal{V}(\mathcal{D})|$ vertices.  For a chosen site $v$,
  assume that the directed connectivity decays exponentially with the
  distance, i.e., there are some constants $C>0$ and $\rho<1$ such that
  $\tau_{v,u}\le C \rho^{d(v,u)}$, while the number of sites at
  distance $m$ from $v$ is exponentially bounded,
  $|\{u\in\mathcal{V}:d(v,u)=m\}|\le C' \Delta^m$, for some $C'>0$ and
  $\Delta>1$.  Then, for any $q$ such that $x\equiv \Delta\,\rho^q<1$, 
\begin{equation}
  \label{eq:susceptibility-bound-power}
  \chi_\mathrm{out}(v)\le n^{1-1/q} {C (C')^{1/q}\over (1-x)^{1/q}}.
\end{equation}

\end{lemma}
\begin{proof}
  Consider Eq.~(\ref{eq:chi-v-alt}) as a $1$-norm of an $n$-component
  vector of connectivities.  Eq.~(\ref{eq:susceptibility-bound-power})
  follows immediately from the H\"older's inequality,
  $\| a\|_1\le \|a\|_q n^{1-1/q}$, where the $q$-norm is upper bounded
  in terms of an infinite geometrical series with the common ratio
  $x$.
\end{proof}
Note: for a digraph whose underlying graph has a maximum degree
$d_\mathrm{max}$, $\Delta\le d_\mathrm{max}-1$.

\subsection{Spectral  bounds on connectivity and SAC susceptibility}
\label{sec:upper-spec-conn-bounds}

It is easy to check that whenever the height ratio of the
corresponding PF vector is bounded, all connectivity functions decay
exponentially with the distance if $\rho(A_p)<1$.  However, since only
two vertices are involved, we can get a more general statement:
\begin{theorem}
  \label{th:dir-connectivity-bound-A}
  Consider heterogeneous site percolation on a finite
  digraph $\mathcal{D}$ characterized by the weighted adjacency matrix
  ${A}_p$ with the spectral radius $\rho\equiv \rho(A_p)<1$.
  Then, for any pair of vertices, the directed connectivity either from $u$
  to $v$ or from $v$ to $u$ is exponentially bounded,
  \begin{equation}
    \label{eq:dir-connectivity-bound}
\tau_{u,v}\le (1- \rho)^{-1}\rho^{d(u,v)}, \;\,\text{or}\;\,
\tau_{v,u}\le (1- \rho)^{-1}\rho^{d(v,u)}, 
 \end{equation}
  where $d(u,v)$ is the directed distance from $u$ to $v$. 
\end{theorem}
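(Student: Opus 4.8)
The plan is to reduce $\tau_{u,v}$ to a sum of matrix elements of powers of $A_p$ by a union bound, and then to control those elements with a single strictly positive test vector, exploiting the fact that only the two endpoints $u,v$ enter. First I would observe that if $v$ lies in the open out-cluster of $u$, then the induced open subgraph must contain at least one self-avoiding directed path from $u$ to $v$; applying the union bound over all such paths $P=(u=w_0\to w_1\to\cdots\to w_s=v)$ and using independence of the site variables gives $\tau_{u,v}\le\sum_P\prod_{k=0}^s p_{w_k}$. Since the contribution of the corresponding walk to $[(A_p)^s]_{uv}$ is $\sqrt{p_u p_v}\prod_{k=1}^{s-1}p_{w_k}$, each path probability equals exactly $\sqrt{p_u p_v}$ times its walk contribution; dropping the self-avoidance constraint, which only adds non-negative terms, and using $\sqrt{p_u p_v}\le1$, I obtain $\tau_{u,v}\le\sum_{s\ge d(u,v)}[(A_p)^s]_{uv}$, the sum starting at $s=d(u,v)$ because no shorter walk connects $u$ to $v$.

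The second step is to bound a single matrix element $[(A_p)^s]_{uv}$ by $\rho^s$ up to a ratio of entries of a positive vector. For any $\rho'\in(\rho,1)$ the vector $\xi\equiv\sum_{t\ge0}(A_p/\rho')^t\mathbf{1}$ is well defined, strictly positive, and satisfies $A_p\xi\le\rho'\xi$ componentwise; iterating this inequality and using $A_p\ge0$ yields $[(A_p)^s\xi]_u\le(\rho')^s\xi_u$, and keeping only the $w=v$ term of the non-negative sum on the left gives $[(A_p)^s]_{uv}\le(\rho')^s\,\xi_u/\xi_v$. Summing the geometric series from $s=d(u,v)$ then bounds $\tau_{u,v}\le (\xi_u/\xi_v)\,(\rho')^{d(u,v)}/(1-\rho')$, and symmetrically $\tau_{v,u}\le(\xi_v/\xi_u)\,(\rho')^{d(v,u)}/(1-\rho')$.

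The decisive observation, and the reason the clean bound without a height ratio survives for at least one direction, is that one of $\xi_u\le\xi_v$ or $\xi_v\le\xi_u$ must hold, so at least one of the ratios $\xi_u/\xi_v$, $\xi_v/\xi_u$ is $\le1$; for that direction the prefactor disappears and the stated bound follows with $\rho'$ in place of $\rho$. The last step is to let $\rho'\downarrow\rho$. I expect the main obstacle to be precisely here: the test vector $\xi$, and hence which of the two directions enjoys the favorable ratio, depends on $\rho'$, so I cannot simply substitute $\rho'=\rho$. I would resolve this with a pigeonhole/subsequence argument along a sequence $\rho'_n\downarrow\rho$: the same direction occurs for infinitely many $n$, and for that fixed direction the right-hand side $(\rho'_n)^{d}/(1-\rho'_n)$ converges to $\rho^{d}/(1-\rho)$, delivering the bound at $\rho$. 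If one restricts to a strongly connected $\mathcal{D}$ this technicality evaporates entirely: $\xi$ may be taken to be the genuine right Perron--Frobenius eigenvector with $A_p\xi=\rho\xi$, the inequalities become equalities with $\rho'=\rho$, and the comparison between $\xi_u$ and $\xi_v$ closes the argument at once.
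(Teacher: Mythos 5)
Your proof is correct, and its skeleton --- union bound over self-avoiding directed paths, identification of each path probability with $\sqrt{p_up_v}$ times a walk contribution to $[(A_p)^s]_{uv}$, a positive comparison vector turning $[(A_p)^s]_{uv}$ into $\rho^s\,\xi_u/\xi_v$, and the observation that one of the two reciprocal ratios is at most $1$ --- is exactly the paper's. Where you genuinely diverge is in how the positive comparison vector is produced for a digraph that need not be strongly connected. The paper splits into cases: if $u$ and $v$ are not mutually reachable at least one connectivity vanishes and the claim is trivial, and otherwise it invokes the right Perron--Frobenius vector of $A_p$, asserting that its components at $u$ and $v$ are nonzero. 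You instead build the resolvent-type vector $\xi=\sum_{t\ge0}(A_p/\rho')^t\mathbf{1}$ for $\rho'\in(\rho,1)$, which is unconditionally strictly positive and satisfies $A_p\xi\le\rho'\xi$, and then pass to the limit $\rho'\downarrow\rho$ with a pigeonhole subsequence to fix which direction enjoys the favorable ratio. Your route costs an extra limiting argument (which you correctly identify and close), but it buys robustness: for a reducible non-negative matrix the PF eigenvector can vanish on strongly connected classes that do not have access to a class realizing the spectral radius, so the paper's assertion that $\xi_u,\xi_v>0$ whenever $u$ and $v$ are mutually reachable requires more care than the paper gives it; your construction sidesteps that issue entirely, and, as you note, collapses to the paper's argument verbatim when $\mathcal{D}$ is strongly connected.
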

\begin{proof}
  The statement of the theorem follows immediately if $u$ or $v$ are
  not strongly connected, as in this case either or both directed
  connectivities are zero.  When $u$ and $v$ are strongly connected,
  the corresponding components of the right PF vector of $A_p$ are
  both non-zero, $\xi_u>0$, $\xi_v>0$.  The argument similar
  to that in the proof of Theorem~\ref{th:chi-out-bound-A} gives the
  bound 
  \begin{equation}
    \label{eq:10}
    \tau_{u,v}\le {\xi_u\over \xi_v} {\rho^{d(u,v)}\over 1-\rho}.
  \end{equation}
 The pair with smaller prefactor proves (\ref{eq:dir-connectivity-bound}).
\end{proof}

Theorem \ref{th:dir-connectivity-bound-A} can be interpreted as a
bound on the strong connectivity, and an associated bound on
strong-cluster susceptibility from  Lemma
\ref{th:exp-decay-susceptibility}:
\begin{corollary}
\label{th:undir-connectivity-bound}
Consider heterogeneous site percolation on a finite digraph
$\mathcal{D}$ with $n\equiv |{\cal V}({\cal D})|$ vertices,
characterized by the weighted adjacency matrix 
${A}_p$ with the spectral radius $\rho\equiv \rho(A_p)<1$.
Then,  for any pair of vertices, the probability that they are in the
same strongly connected cluster is exponentially bounded,
  \begin{equation}
    \label{eq:str-connectivity-bound}
    \tau_\mathrm{str}(u,v)\le (1- \rho)^{-1}\rho^{\min[d(u,v),\,d(v,u)]}.
  \end{equation}
In addition, if the  underlying graph of ${\cal D}$ has a maximum degree
$d_\mathrm{max}$,  
strong-cluster susceptibility satisfies, 
\begin{equation}
  \label{eq:str-cluster-bound-A}
  \chi_\mathrm{str}(v)\le C\,n^{1-1/q} ,\quad C={1\over 1-\rho} \left({2\over
      1-x}\right)^{1/q}, 
\end{equation}
for some $q>0$ such that $x\equiv (d_\mathrm{max}-1)\rho^q<1$. 
\end{corollary}

We now would like to prove a tighter version of  Theorem
\ref{th:dir-connectivity-bound-A} which involves non-backtracking
paths and the weighted Ha\-shi\-moto matrix $H_p$, see Eq.~(\ref{eq:Hp}).  To
this end, we first quantify the strong  connectedness of the OLG of
the digraph $\mathcal{D}$: 
\begin{lemma} 
  \label{th:lemma-local-str-connectivity}
  Let $\eta$ be the right PF eigenvector of the matrix $H_p$
  corresponding to the positive eigenvalue
  $\lambda\equiv \rho(H_p)>0$.  Consider a pair of mutually inverted
  arcs $a \equiv i\to j$ and $\bar a\equiv j\to i$,
  $\{a,\bar a\}\subset{\cal A}({\cal D})$, connected by a length-$\ell$ simple
  path,
  \begin{equation}
    \mathcal{P}\equiv \{j_0\to j_1,j_1\to
    j_2,\ldots , j_{\ell-1}\to j_{\ell},j_\ell\to j_0
    \},\label{eq:path-cyc}   
  \end{equation}
  where $j_0\equiv i$, and $j_1=j_\ell\equiv j$; we assume
  $\mathcal{P}\subseteq\mathcal{A}(\mathcal{D})$.  Denote
  \begin{equation}
  P\equiv P(\mathcal{P})\equiv
  \lambda^{-\ell}\prod_{i=1}^{\ell}p_{j_i}\;\,\,\text{and}\;\,\, 
  x_i\equiv\!\!\!    \sum_{l:i\to l\in\mathcal{A}(\mathcal{D})}\eta_{i\to l}.
  \label{eq:P}  
\end{equation}
Then the following inequalities are true:
\begin{eqnarray}
  \label{eq:xi-reversed-bound} 
    \eta_{i\to j}&\ge & {P}\,\eta_{j\to i},\\
    \eta_{i\to j}&\ge& {x_j p_j \,P\over p_j+ \lambda\, P}.
  \label{eq:xi-bound} 
  \end{eqnarray}
\end{lemma}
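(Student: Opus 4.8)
The plan is to iterate the eigenvector equation (\ref{eq:EV}) for $H_p$ along the prescribed path $\mathcal{P}$, discarding all but one term at each step. First I would put (\ref{eq:EV}) into manifestly non-backtracking form: since the subtracted term $-p_j\eta_{j\to i}$ is present exactly when $j\to i$ is an arc, it cancels the $l=i$ contribution in the sum, leaving $\lambda\eta_{i\to j}=p_j\sum_{l:\,j\to l,\;l\neq i}\eta_{j\to l}$. Because $\eta$ is a Perron--Frobenius vector its components are non-negative, so this restricted sum is bounded below by any single surviving summand. Note that the endpoint $\bar a=j\to i$ being an arc means $(i,j)$ is a symmetric bond, so $\eta_{j\to i}$ genuinely occurs in (\ref{eq:EV}).

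Next I would telescope along $\mathcal{P}$. At the $k$-th arc $j_{k-1}\to j_k$ the non-backtracking equation gives $\lambda\,\eta_{j_{k-1}\to j_k}\ge p_{j_k}\,\eta_{j_k\to j_{k+1}}$, where the kept term $\eta_{j_k\to j_{k+1}}$ really does appear in the restricted sum precisely because $\mathcal{P}$ does not backtrack, i.e.\ $j_{k+1}\neq j_{k-1}$. Chaining these for $k=1,\dots,\ell$, with the index closing up as $j_{\ell+1}=j_0$ so that the final arc is $\bar a=j\to i$, accumulates the factors $p_{j_k}/\lambda$ and yields $\eta_{i\to j}\ge\lambda^{-\ell}\bigl(\prod_{k=1}^{\ell}p_{j_k}\bigr)\eta_{j\to i}=P\,\eta_{j\to i}$, which is exactly (\ref{eq:xi-reversed-bound}).

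To obtain (\ref{eq:xi-bound}) I would feed this reversed bound back into (\ref{eq:EV}) at the arc $a=i\to j$ itself. Writing (\ref{eq:EV}) as $\lambda\eta_{i\to j}=p_j x_j-p_j\eta_{j\to i}$ with $x_j$ as in (\ref{eq:P}), and substituting $\eta_{j\to i}\le \eta_{i\to j}/P$ (legitimate since $\lambda>0$ and all $p_v>0$ force $P>0$), I get $\bigl(\lambda+p_j/P\bigr)\eta_{i\to j}\ge p_j x_j$; solving for $\eta_{i\to j}$ gives $\eta_{i\to j}\ge x_j p_j P/(p_j+\lambda P)$, the claimed bound.

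The computation is in essence a telescoping estimate, so the only real subtlety is bookkeeping: one must check that the non-backtracking cancellation in (\ref{eq:EV}) leaves precisely the needed term at every step, which is guaranteed by the self-avoiding (hence non-backtracking) hypothesis on $\mathcal{P}$ together with the closure $j_\ell\to j_0=\bar a$. The one genuinely non-mechanical point I expect to be the crux is the self-referential estimate behind (\ref{eq:xi-bound}): the reversed inequality (\ref{eq:xi-reversed-bound}) is used to bound $\eta_{j\to i}$ in terms of $\eta_{i\to j}$, converting the eigenvector \emph{identity} into a one-sided inequality that can then be solved for $\eta_{i\to j}$. Here I would be careful that $P>0$, so that the division is valid and the denominator $p_j+\lambda P$ is strictly positive.
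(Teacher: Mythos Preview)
Your argument is correct. The derivation of (\ref{eq:xi-reversed-bound}) by telescoping the eigenvalue equation along $\mathcal{P}$ is exactly what the paper does.

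For (\ref{eq:xi-bound}), your route is actually a bit more direct than the paper's. The paper writes down the eigenvalue equation (\ref{eq:EV}) for \emph{both} arcs $i\to j$ and $j\to i$, solves this $2\times2$ linear system explicitly for $\eta_{i\to j}$ and $\eta_{j\to i}$ in terms of $x_i$ and $x_j$ (under the side assumption $\lambda^2\neq p_ip_j$), substitutes (\ref{eq:xi-reversed-bound}) into those explicit formulas to obtain an inequality between $x_i$ and $x_j$, and then substitutes back to get (\ref{eq:xi-bound}). You instead take only the single equation $\lambda\eta_{i\to j}=p_jx_j-p_j\eta_{j\to i}$, feed in $\eta_{j\to i}\le \eta_{i\to j}/P$, and solve directly. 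This is shorter, never brings $x_i$ into play, and avoids the nondegeneracy hypothesis $\lambda^2\neq p_ip_j$ that the paper needs to invert the $2\times2$ system. The paper's longer route does produce the intermediate relation between $x_i$ and $x_j$, which could in principle be reused elsewhere, but for the stated lemma your shortcut is cleaner.
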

\begin{proof}
  First, notice that
  $\eta$ satisfies the eigenvalue equation~(\ref{eq:EV}).  Using it
  repeatedly along $\mathcal{P}$, write
  \begin{equation}
    \label{eq:return-ineq}
    \eta_{j_0\to j_1}\ge {p_{j_1}\over \lambda}\eta_{j_1\to j_2}\ge {p_{j_1}
      p_{j_2}\over \lambda^2}\eta_{j_2\to j_3}\ge \ldots
    \ge {\eta_{j_\ell\to j_0}\over \lambda^\ell}\prod_{v=1}^\ell {p_{j_v}} , 
  \end{equation}
  which gives Eq.~(\ref{eq:xi-reversed-bound}).  Second, notice that
  the sum in the RHS of Eq.~(\ref{eq:EV}) equals $x_j$.   Combine
  Eq.~(\ref{eq:EV}) with a similar equation for the inverted arc
  $j\to i$ which contains $x_i$.  Together, the two equations give (we
  assume $\lambda^2\neq p_i p_j$)
\begin{equation}
  \label{eq:xi}
  \eta_{i\to j}=p_j{\lambda x_j-p_i x_i\over \lambda^2-p_i p_j},\quad 
  \eta_{j\to i}=p_i{\lambda x_i-p_j x_j\over \lambda^2-p_i p_j}.
\end{equation}
Substite in Eq.~(\ref{eq:xi-reversed-bound}) to obtain
\begin{equation}
  \label{eq:x-inequality} 
{p_ix_i\over \lambda^2-p_ip_j}\le {\lambda +Pp_i\over p_j+\lambda P} 
{p_jx_j\over \lambda^2-p_ip_j},
\end{equation}
where the denominator is preserved for its sign.  The lower bound
(\ref{eq:xi-bound}) is obtained by substituting back 
into~(\ref{eq:xi}).
\end{proof}
For every vertex $j$ of a digraph $\mathcal{D}$ with a strongly-connected
OLG, let us define the \emph{minimal return probability},
\begin{equation}
  \label{eq:min-ret-prob}
  P_j\equiv \min\left({p_j\over\lambda},\min_{a\equiv l\to j} \max_{\mathcal{P}:a\to \bar a} P(\mathcal{P})\right),\;\,
  a\in\mathcal{A}(\mathcal{D}), 
\end{equation}
where the minimum is taken over all arcs leading to $j$, 
$\mathcal{P}$ is a non-backtracking path (\ref{eq:path-cyc}) connecting
$a\equiv l\to j$ and its inverse, $\bar a\equiv j\to l$, and
$P(\mathcal{P})$ is defined in Eq.~(\ref{eq:P}).  For $a=i\to j $ such
that the inverted arc does not exist, $\bar
a\not\in\mathcal{A}(\mathcal{D})$, we should use $P=p_j/\lambda$, see
Eq.~(\ref{eq:EV}). Notice that thus defined $P_j$ is a local quantity;
we expect it to be bounded away from zero for (di)graphs with many
short cycles.

We can now prove 
\begin{theorem}
  \label{th:dir-connectivity-bound-H}
  Consider heterogeneous site percolation on a finite digraph
  $\mathcal{D}$ characterized by the weighted Hashimoto matrix $H_p$
  with the spectral radius $\rho\equiv\rho(H_p)$ such that $0<\rho<1$.
  Let the OLG of $\mathcal{D}$ be also strongly connected, with a
  non-zero minimal return probability, $P_\mathrm{min}\equiv
  \min_{j\in\mathcal{V}(\mathcal{D})}P_j>0$.
  Then, for any pair of vertices, the directed connectivity either
  from $i$ to $j$ or from $j$ to $i$ is exponentially bounded,%
  \begin{equation}
    \label{eq:dir-connectivity-bound-H}
    \tau_{i,j}\le {\rho^{d(i,j)-1}\over
      (1-\rho)}(2P_\mathrm{min}^{-1}),\;\,\,\text{or}\;\,\,
    \tau_{j,i}\le   {\rho^{d(j,i)-1}\over
      (1-\rho)}(2P_\mathrm{min}^{-1}). 
  \end{equation}
\end{theorem}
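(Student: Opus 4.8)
The plan is to mirror the proof of Theorem~\ref{th:dir-connectivity-bound-A}, but to replace the sum over self-avoiding directed paths counted by $A_p$ with one counted by the weighted Hashimoto matrix $H_p$, exploiting that every self-avoiding directed path is non-backtracking. First I would dispose of the trivial case: if $j$ is unreachable from $i$ (resp.\ $i$ from $j$), the corresponding $\tau$ vanishes and the bound holds. Otherwise, since the OLG is strongly connected, the right PF vector $\eta$ of $H_p$ is strictly positive and satisfies $H_p\eta=\rho\eta$. Writing a directed path $i=v_0\to v_1\to\cdots\to v_L=j$ as a non-backtracking walk in the OLG whose first arc $a$ leaves $i$ and whose last arc $b$ enters $j$, the probability that all its vertices are open is $\prod_{r=0}^{L}p_{v_r}=p_i\,p_j\prod_{k=1}^{L-1}p_{v_k}$, and the interior product $\prod_{k=1}^{L-1}p_{v_k}$ is exactly the $H_p$-weight recorded by $[(H_p)^{L-1}]_{ab}$. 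A union bound over self-avoiding paths (each of which is non-backtracking) then gives
\[\tau_{i,j}\le p_i\,p_j\sum_{a:\,i\to\cdot}\;\sum_{s\ge d(i,j)-1}\;\sum_{b:\,\cdot\to j}[(H_p)^s]_{ab},\]
the exponent $d(i,j)-1$ arising because a length-$L$ path corresponds to $L-1$ steps in the OLG.

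The crucial step is to collapse the inner double sum \emph{without} picking up a vertex-degree factor. Using $(H_p)^s\eta=\rho^s\eta$ and nonnegativity of the entries, $\sum_{b:\,\cdot\to j}[(H_p)^s]_{ab}\,\eta_b\le[(H_p)^s\eta]_a=\rho^s\eta_a$. I would then invoke Lemma~\ref{th:lemma-local-str-connectivity} to establish the uniform lower bound $\eta_b\ge x_j\,P_\mathrm{min}/2$ for every arc $b$ entering $j$: the right-hand side of (\ref{eq:xi-bound}) is increasing in $P$; every arc into $j$ admits a return path with $P(\mathcal{P})\ge P_j\ge P_\mathrm{min}$ by the definition (\ref{eq:min-ret-prob}) (with the convention $P=p_j/\lambda$ from (\ref{eq:EV}) when $\bar b\notin\mathcal{A}(\mathcal{D})$); and $P_j\le p_j/\lambda$ forces $p_j+\lambda P_j\le 2p_j$, so that $x_j p_j P_j/(p_j+\lambda P_j)\ge x_j P_j/2$. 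Combining the two displayed facts yields $\sum_{b:\,\cdot\to j}[(H_p)^s]_{ab}\le 2\rho^s\eta_a/(x_j P_\mathrm{min})$. Summing over the arcs $a$ leaving $i$, where $\sum_{a:\,i\to\cdot}\eta_a=x_i$ by the definition of $x_i$ in (\ref{eq:P}), and over $s\ge d(i,j)-1$, gives
\[\tau_{i,j}\le \frac{2\,p_i\,p_j\,x_i}{x_j\,P_\mathrm{min}}\,\frac{\rho^{\,d(i,j)-1}}{1-\rho}.\]

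The same computation with the roles of $i$ and $j$ exchanged produces the companion bound with prefactor $2p_ip_jx_j/(x_iP_\mathrm{min})$ and exponent $d(j,i)-1$. To finish, note $p_ip_j\le1$, so at least one of the two prefactors is at most $2P_\mathrm{min}^{-1}$: if $x_i\le x_j$ the factor $x_i/x_j\le1$ makes the first bound reduce to (\ref{eq:dir-connectivity-bound-H}) for $\tau_{i,j}$, and otherwise the second does for $\tau_{j,i}$. This is the exact analog of the ``pair with smaller prefactor'' step in Theorem~\ref{th:dir-connectivity-bound-A}. I expect the \textbf{main obstacle} to be the clean row-sum estimate of the second paragraph: the naive bound $[(H_p)^s]_{ab}\le\rho^s\eta_a/\eta_b$ summed over all arcs $b$ into $j$ would introduce a factor $\id(j)$, and it is precisely the uniform lower bound $\eta_b\ge x_j P_\mathrm{min}/2$ supplied by Lemma~\ref{th:lemma-local-str-connectivity}---the very reason the minimal return probability (\ref{eq:min-ret-prob}) was introduced---that removes it and delivers the single factor $2P_\mathrm{min}^{-1}$.
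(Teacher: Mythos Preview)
Your proposal is correct and follows essentially the same route as the paper's proof: the same union bound over non-backtracking walks, the same use of the right PF eigenvector to collapse $\sum_b[(H_p)^s]_{ab}$ to $\rho^s\eta_a/\eta_{\min}^{(j)}$, the same appeal to Lemma~\ref{th:lemma-local-str-connectivity} via Eq.~(\ref{eq:xi-bound}) to bound $\eta_{\min}^{(j)}$ from below by $x_jp_jP_j/(p_j+\rho P_j)$, and the same ``smaller of $x_i/x_j$ or $x_j/x_i$'' selection at the end. The only cosmetic differences are that the paper carries the separate $\delta_{d(i,j),1}$ term (harmless overcounting) and postpones the simplification $p_j+\rho P_j\le 2p_j$ to the final line, whereas you apply it one step earlier to write $\eta_b\ge x_jP_{\min}/2$ directly; and you make the symmetric ``pair with smaller prefactor'' step explicit where the paper leaves it implicit.
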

\begin{proof}
  Start with the union bound for the directed connectivity, formulated
  in terms of non-backtracking paths from $i$ to $j$,
  \begin{equation}
    \label{eq:connectivity-H}
    \tau_{i,j}\le p_ip_j
 \sum_{m\ge d(i,j)-1} \sum_{l:a\equiv i\to l\in
      \mathcal{A}}\sum_{l':b\equiv l'\to j\in \mathcal{A}} [H_p^m]_{ab},
  \end{equation}
where we assume $i\neq j$, so that $d(i,j)\ge1$.  Construct
a further upper bound by introducing the factor [cf.~Eq.\ (\ref{eq:xi-bound})]
\begin{equation}
{\eta_b\over \eta^{(j)}_{\rm min}}\ge 1, \quad
\eta_{\rm min}^{(j)}\equiv \min_{l':l'\to j\in\mathcal{A}}\eta_{l'\to
  j}\ge {x_j p_j P_j\over p_j +\rho P_j},\label{eq:eta-min-J}
\end{equation}
 and extending the summation
over $b$ to all arcs.  This replaces the matrix element $[H_p^m]_{ab}$
with $\rho^m $,
\begin{equation}
  \label{eq:connectivity-rhoH}
    \tau_{i,j}\le p_ip_j
 \sum_{m\ge d(i,j)-1} \sum_{l:a\equiv i\to l\in
      \mathcal{A}}\rho^m {\eta_{a}\over \eta_{\rm min}^{(j)} } .
\end{equation}
The summation over arcs $a$ from $i$ gives $x_i$ in the numerator, see
Eq.~(\ref{eq:P}), while the lower bound (\ref{eq:eta-min-J}) gives
$x_j$ in the denominator.  We obtain  
$$
  \tau_{i,j}\le p_ip_j
{x_i\over x_j} \sum_{m\ge d(i,j)-1}\rho^m \left({\rho\over
    p_j}+{1\over P_j}\right).
$$
The uniform connectivity bound (\ref{eq:dir-connectivity-bound-H}) is
obtained after the summation using the inequalities
$p_j/\rho\ge P_j\ge P_\mathrm{min}$, $p_i\le1$.%
\end{proof}

This gives a conditionally stronger version of Corollary
\ref{th:undir-connectivity-bound}:
\begin{corollary}
\label{th:undir-connectivity-bound-H}
Consider heterogeneous site percolation on a finite digraph
$\mathcal{D}$ characterized by the weighted Hashimoto matrix
${H}_p$ with the spectral radius $\rho\equiv \rho(H_p)$,
$0<\rho<1$.  Then, if $\mathcal{D}$ satisfies the conditions of Lemma
\ref{th:lemma-local-str-connectivity}, the probability that a pair of
vertices are in the same strongly-connected open cluster is
exponentially bounded,
  \begin{equation}
    \label{eq:undir-connectivity-bound-H}
    \tau_{\rm str}(u,v)\le 
    {\rho^{{\min[d(u,v),\,d(v,u)]}-1}\over (1- \rho)}(2P_\mathrm{min}^{-1}). 
  \end{equation}
\end{corollary}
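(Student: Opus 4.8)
The plan is to obtain this as a direct specialization of Theorem~\ref{th:dir-connectivity-bound-H} to the symmetric digraph $\mathcal{D}$ associated with the undirected graph $\mathcal{G}$, in which every bond $u\leftrightarrow v$ is represented by the mutually inverted pair of arcs $u\to v$ and $v\to u$. The key observation is that for such a symmetric digraph the two directed distances coincide, $d(u,v)=d(v,u)$, and, because an open non-backtracking walk from $u$ to $v$ can be traversed in reverse to give an open walk from $v$ to $u$ (equivalently, the underlying undirected connectivity is manifestly symmetric), the directed connectivities are equal, $\tau_{u,v}=\tau_{v,u}$. This symmetry is precisely what collapses the disjunction appearing in the conclusion of Theorem~\ref{th:dir-connectivity-bound-H} into a single bound.

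First I would verify that $\mathcal{D}$ satisfies all hypotheses of Theorem~\ref{th:dir-connectivity-bound-H}. The spectral radius condition $0<\rho(H_p)<1$ is assumed outright. The requirement that the OLG be strongly connected follows from the stated conditions of Lemma~\ref{th:lemma-local-str-connectivity}: the existence, for every arc $a\equiv i\to j$, of a self-avoiding directed path~(\ref{eq:path-cyc}) from $a$ to $\bar a\equiv j\to i$ is exactly what is needed, and in the symmetric setting this is the no-leaves condition of Lemma~\ref{th:olg-strongly-connected}(\textbf{b}). These same return paths furnish, through the definition~(\ref{eq:min-ret-prob}), a strictly positive minimal return probability $P_\mathrm{min}>0$ on a finite graph: each $P_j$ is the minimum of $p_j/\lambda>0$ and finitely many further terms, every one of which is a maximum over the (finitely many) self-avoiding return paths of the strictly positive quantity $P(\mathcal{P})$, so $P_\mathrm{min}=\min_j P_j$ is a minimum of finitely many positive numbers.

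With the hypotheses in place, Theorem~\ref{th:dir-connectivity-bound-H} asserts that at least one of the inequalities $\tau_{u,v}\le \rho^{d(u,v)-1}(1-\rho)^{-1}(2P_\mathrm{min}^{-1})$ or $\tau_{v,u}\le \rho^{d(v,u)-1}(1-\rho)^{-1}(2P_\mathrm{min}^{-1})$ holds. Invoking $\tau_{u,v}=\tau_{v,u}$ and $d(u,v)=d(v,u)$, the two candidate inequalities are literally identical, so whichever one the theorem delivers is exactly~(\ref{eq:undir-connectivity-bound-H}), and it therefore holds for every pair of vertices.

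I do not anticipate a genuine obstacle, as this is a true corollary: the analytic content---the union bound over non-backtracking walks, the replacement of matrix elements of $H_p^m$ by powers of $\rho$ via the PF-vector height factor~(\ref{eq:eta-min-J}), and the geometric summation---is already carried out in the proof of Theorem~\ref{th:dir-connectivity-bound-H}. The only point requiring care is confirming that the conditions of Lemma~\ref{th:lemma-local-str-connectivity} really do imply both strong connectedness of the OLG and $P_\mathrm{min}>0$; should one prefer a self-contained argument, one could instead re-run the union-bound estimate of Theorem~\ref{th:dir-connectivity-bound-H} directly for the undirected walk between $u$ and $v$, where the symmetry $\tau_{u,v}=\tau_{v,u}$ removes any need to track the two orientations separately.
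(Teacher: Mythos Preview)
Your proposal is correct and matches the paper's approach exactly: the paper gives no separate proof, treating the corollary as immediate from Theorem~\ref{th:dir-connectivity-bound-H} via the symmetry $\tau_{u,v}=\tau_{v,u}$ and $d(u,v)=d(v,u)$, just as was done for Corollary~\ref{th:undir-connectivity-bound}. Your additional care in checking that the hypotheses of Theorem~\ref{th:dir-connectivity-bound-H} (strong connectedness of the OLG and $P_{\mathrm{min}}>0$) are indeed guaranteed by the conditions of Lemma~\ref{th:lemma-local-str-connectivity} on a finite undirected graph is a welcome elaboration of what the paper leaves implicit.
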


Exponential decay of connectivity with the distance also implies a
sublinear scaling of the expected size of a strongly-connected
cluster (Lemma \ref{th:exp-decay-susceptibility}):
\begin{corollary}
  Consider heterogeneous site percolation on a digraph $\mathcal{D}$
  with $n\equiv |\mathcal{V}(\mathcal{D})|$ vertices, characterized by
  the weight\-ed Hashimoto matrix ${H}_p$ with the spectral
  radius $\rho\equiv \rho(H_p)$, $0<\rho<1$.  Let $d_\mathrm{max}$ be
  the maximum degree of the undirected graph underlying $\mathcal{D}$,
  and $q\ge 1$ satisfy the inequality
  $x\equiv (d_\mathrm{max}-1)\rho^q<1$.  Then, if the graph satisfies
  the conditions of Lemma \ref{th:lemma-local-str-connectivity}, we
  have the following bound for strongly-connected cluster
  susceptibility,
  \begin{equation}
    \label{eq:chi-bound-power-law}
    \chi_\mathrm{str}(v)\le n^{1-1/q}{(2P_\mathrm{min}^{-1})\over (1-
      \rho)(1-x)^{1/q}}. 
  \end{equation}
\end{corollary}
This bound guarantees that in a large digraph, a giant strongly
connected component occurs with asymptotically zero probability as
long as the corresponding prefactor remains bounded.

We conclude this subsection with the following universal bound on the
expected number of SACs through a given arc:%
\begin{theorem}
  \label{th:chi-SAC-bound}
  Let $H_p$ be the weighted Hashimoto matrix (\ref{eq:Hp}) for
  heterogeneous site percolation on a finite digraph
  $\mathcal{D}$.  Then, if the
  spectral radius $\rho(H_p)<1$, the SAC susceptibility for any arc
  $a\in\mathcal{A}(\mathcal{D})$ is bounded,
\begin{equation}
  \label{eq:SAC-bound}
  \chi_\mathrm{SAC}(a)\le [1-\rho(H_p)]^{-1}.
\end{equation}
\end{theorem}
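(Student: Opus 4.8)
The plan is to count self-avoiding cycles through $a$ by their length and to bound the expected count at each length by a diagonal entry of a power of $H_p$, exactly as anticipated in the remark preceding the theorem. By linearity of expectation, $\chi_\mathrm{SAC}(a)=\sum_{s}N_s(a)$, where $N_s(a)$ is the expected number of distinct directed SACs of length $s$ passing through $a$ on the open subgraph $\mathcal{D}'$. A directed SAC of length $s$ through $a$, traversed once starting from $a$, is precisely a closed non-backtracking walk of length $s$ from the arc $a$ back to itself whose $s$ visited vertices are all distinct. Since we fix both the starting arc $a$ and the length $s$, each such cycle corresponds to a unique walk and there is no overcounting.

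First I would verify that an open SAC is weighted by exactly its open probability inside $[H_p^s]_{aa}$. Tracing the definition (\ref{eq:Hp}), a closed non-backtracking walk $a=a_0,a_1,\dots,a_s=a$ through the vertices $i_0,i_1,\dots,i_{s-1}$ contributes the weight $\prod_{k=0}^{s-1}p_{i_k}$ to $[H_p^s]_{aa}$, because each step $a_{k-1}\to a_k$ carries the factor $p$ of its shared (middle) vertex. For a self-avoiding cycle these $s$ vertices are distinct, so this weight equals the probability that every vertex of the cycle is open, i.e.\ the contribution of that cycle to $N_s(a)$. As $[H_p^s]_{aa}$ sums the same non-negative weights over \emph{all} closed non-backtracking walks of length $s$ (self-avoiding or not), it follows that $N_s(a)\le[H_p^s]_{aa}$, and hence $\chi_\mathrm{SAC}(a)\le\sum_{s\ge1}[H_p^s]_{aa}$.

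The crux is then the diagonal estimate $[H_p^s]_{aa}\le\rho(H_p)^s$, which I expect to be the main obstacle because the theorem does not assume the OLG is strongly connected: the right PF vector of $H_p$ may have vanishing components, so the naive estimate $\sum_v[H_p^s]_{av}\eta_v=\rho(H_p)^s\eta_a$ need not isolate the diagonal term. The remedy is to localize. Any closed walk from $a$ back to $a$ visits only arcs that are simultaneously reachable from $a$ and can reach $a$, i.e.\ arcs in the strongly connected component $S$ of $a$ in the OLG; therefore $[H_p^s]_{aa}=[B^s]_{aa}$ for $B$ the principal submatrix of $H_p$ supported on $S$. If $a$ lies on no cycle then $S=\{a\}$, $B=0$, and $\chi_\mathrm{SAC}(a)=0$ trivially; otherwise $B$ is irreducible with a strictly positive PF vector $\eta$, $B\eta=\rho(B)\eta$, and non-negativity gives $[B^s]_{aa}\eta_a\le\sum_{v\in S}[B^s]_{av}\eta_v=\rho(B)^s\eta_a$, whence $[B^s]_{aa}\le\rho(B)^s$. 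Since $\rho(B)\le\rho(H_p)$ for a principal submatrix of a non-negative matrix, I conclude $[H_p^s]_{aa}\le\rho(H_p)^s$.

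Finally I would sum the geometric series, $\chi_\mathrm{SAC}(a)\le\sum_{s\ge1}\rho(H_p)^s<\sum_{s\ge0}\rho(H_p)^s=[1-\rho(H_p)]^{-1}$, valid because $\rho(H_p)<1$ by hypothesis, which is exactly (\ref{eq:SAC-bound}). I would emphasize that this argument is markedly cleaner than the susceptibility bounds of Section~\ref{sec:upper-spec-bounds}: since only returns to a single arc are involved, no height ratio $\gamma_R$ enters, and strong connectivity of the full OLG is not required, the localization to $S$ supplying everything that is needed.
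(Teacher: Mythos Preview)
Your argument is correct and follows the same skeleton as the paper's proof: bound $N_s(a)$ by the diagonal entry $[H_p^s]_{aa}$, show $[H_p^s]_{aa}\le\rho(H_p)^s$ via a positive PF eigenvector, and sum the geometric series. The only substantive difference is how the non--strongly-connected case is handled. The paper treats the case of a strongly connected OLG first, then reduces the general case to it by a perturbation: auxiliary vertices and arcs with open probability $\epsilon$ are added to make $\mathcal{D}$ strongly connected, the bound (which is independent of the PF vector's components) is applied, and one lets $\epsilon\to0$. Your localization to the strongly connected component $S$ of $a$ is a cleaner alternative: it avoids the limiting procedure entirely and uses only the standard fact $\rho(B)\le\rho(H_p)$ for principal submatrices. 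Both devices accomplish the same thing---producing an irreducible matrix whose diagonal entries coincide with those of $H_p^s$ at $a$---but yours is more self-contained and does not require continuity of the spectral radius.
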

\begin{proof}
  We first consider the case of a digraph whose OLG is strongly connected. 
  The expected number of SACs, $N_s(a)$, of length $s$ passing through
  the arc $a$ on $\mathcal{D}'$ is bounded by
  \begin{equation}
    N_s(a)\le
    [H_p^s]_{aa}
    \quad
    \text{(no summation.)}\label{eq:Ns-bound-one}
  \end{equation}
  Let $\eta$ be a right PF vector of $H_p$ with all-positive
  components: $\eta_a> 0$, $a\in\mathcal{A}$.  The matrix elements of
  $H_p$ being non-negative, the RHS of Eq.~(\ref{eq:Ns-bound-one}) can
  be further bounded by
  \begin{equation}
    \label{eq:Ns-bound-two}
    [H_p^s]_{aa}=    \sum_{b\in\mathcal{A}} [H_p^s]_{ab}\delta_{ba}<
    \sum_{b\in \mathcal{A}}[H_p^s]_{ab}{\eta_b\over 
      \eta_a}=[\rho(H_p)]^s,  
  \end{equation}
  where there is no implicit summation over the chosen
  $a\in\mathcal{A}(\mathcal{D})$.  Summation over $s$ gives
  Eq.~(\ref{eq:SAC-bound}).

  For a general digraph ${\cal D}$, not necessarily connected, we
  notice that the bound (\ref{eq:SAC-bound}) is independent of the
  actual values of the components of the PF vector $\eta$.  A general
  finite digraph can be made strongly connected by introducing
  additional vertices and additional edges connecting these vertices
  to different strongly connected components of ${\cal D}$, with
  arbitrarily small probabilities $p_i=\epsilon>0$ for these vertices
  to be open.  The statement of the Theorem is obtained in the limit
  $\epsilon\to0$.
\end{proof}

\subsection{Matrix norm bounds}
\label{sec:gen-bounds}

For  digraphs where the height ratios $\gamma_R$ and $\gamma_L$ may be
large, we give a weaker general bound for the local
 in-/out-cluster susceptibilities:
\begin{theorem}
  \label{th:H-norm-one-bound}  
  Let the induced one-norm of the weighted Hashimoto matrix
  corresponding to heterogeneous site percolation on a digraph
  $\mathcal{D}$ satisfy $\|H_p\|_1<1$.  Then, the in-cluster
  susceptibility for vertex $j$ is bounded,
  \begin{equation}
  \chi_\mathrm{in}(j)\le1+
  \id(j)\,(1-\|H_p\|_1)^{-1},\label{eq:H-norm-one-bound}  
\end{equation}
where $\id(j)$ is the in-degree of $j$.
\end{theorem}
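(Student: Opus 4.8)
The plan is to follow the same union-bound strategy used for Theorem~\ref{th:chi-out-bound-H}, but applied to the in-cluster and with the Perron--Frobenius eigenvector step replaced by a submultiplicative norm estimate, so that no strong connectedness of the OLG is required. Starting from the representation $\chi_\mathrm{in}(j)=\sum_{u\in\mathcal{V}(\mathcal{D})}\tau_{u,j}$ [cf.\ Eq.~(\ref{eq:chi-v-alt})], I would first isolate the term $u=j$, which contributes $\tau_{j,j}=p_j\le1$ and accounts for the additive $1$ in Eq.~(\ref{eq:H-norm-one-bound}). For each remaining vertex $u\neq j$, the event that $u$ lies in the in-cluster of $j$ requires an open directed path from $u$ to $j$; bounding its probability by the union bound over non-backtracking walks from $u$ to $j$ yields
\[
\tau_{u,j}\le p_u\,p_j\sum_{s=0}^\infty\ \sum_{a:\,u\to\cdot}\ \sum_{b:\,\cdot\to j}[H_p^s]_{ab},
\]
where $a$ ranges over arcs leaving $u$, $b$ over the $\id(j)$ arcs arriving at $j$, and the factors $p_u$, $p_j$ restore the probabilities of the two endpoints, which are not carried by $H_p$ (whose weights account only for the intermediate vertices visited).

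Summing over $u\neq j$, bounding each tail factor $p_u\le1$, and extending the inner sum over starting arcs to all of $\mathcal{A}(\mathcal{D})$, I would reorganize the triple sum so that $a$ runs over every arc of $\mathcal{D}$ for each fixed arc $b$ arriving at $j$:
\[
\chi_\mathrm{in}(j)\le 1+p_j\sum_{b:\,\cdot\to j}\ \sum_{s=0}^\infty\ \sum_{a\in\mathcal{A}(\mathcal{D})}[H_p^s]_{ab}.
\]
The key observation is that the inner sum $\sum_{a}[H_p^s]_{ab}$ is exactly a column sum of the nonnegative matrix $H_p^s$, hence bounded by its induced one-norm; by submultiplicativity this gives $\sum_{a}[H_p^s]_{ab}\le\|H_p^s\|_1\le\|H_p\|_1^{\,s}$. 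Since $\|H_p\|_1<1$ by hypothesis, the geometric series in $s$ sums to $(1-\|H_p\|_1)^{-1}$; there are $\id(j)$ admissible arcs $b$, and using $p_j\le1$ produces precisely Eq.~(\ref{eq:H-norm-one-bound}).

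The routine parts are the geometric summation and the norm submultiplicativity. I expect the step requiring most care to be the combinatorial bookkeeping of the union bound: verifying that the nonnegative weighted counts $[H_p^s]_{ab}$ correctly upper-bound the sum of path probabilities over non-backtracking walks from $u$ to $j$ (every connecting open path contains a self-avoiding, hence non-backtracking, sub-path, so the union bound applies), and checking that the arc \emph{arriving} at $j$ is the column index, so that the maximal-column-sum norm $\|\cdot\|_1$ rather than the row-sum norm is the relevant quantity. Because the argument invokes only the norm and not a Perron--Frobenius eigenvector, it dispenses with the strong-connectedness assumption on the OLG needed in Theorem~\ref{th:chi-out-bound-H}, at the cost of the stronger hypothesis $\|H_p\|_1<1$ in place of $\rho(H_p)<1$ (recall $\|H_p\|_1\ge\rho(H_p)$).
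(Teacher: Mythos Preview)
Your proposal is correct and follows essentially the same approach as the paper: the paper writes the in-cluster analogue of Eq.~(\ref{eq:out-bound-sum}) as $\chi_\mathrm{in}(j)\le p_j+\bigl\|p_i\sum_{s\ge0}[(H_p)^s]_{vu}\,p_j\bigr\|_1$ (a vector in the site index $i$, with implicit summation over arcs $v=i\to i'$ and $u=j'\to j$) and then invokes the ``standard norm expansion'' together with $p_i\le1$, which is precisely your column-sum / submultiplicativity step carried out in vector form. Your presentation is in fact more explicit than the paper's about why the $1$-norm (maximum column sum) is the relevant norm and where the factor $\id(j)$ comes from.
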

We note that $\|H_p\|_1$ equals to the maximum column weight of $H_p$.
The analogous result for the out-cluster susceptibility is obtained by
considering the transposed matrix $H_p^T$, which gives the maximum row
weight of $H_p$ (also, $\|H^T\|_1=\|H\|_\infty$).  
\begin{proof}[Proof of Theorem \ref{th:H-norm-one-bound}] 
  We start with a version of Eq.~(\ref{eq:out-bound-sum}) for
  in-cluster susceptibility,%
  \begin{equation}
    \label{eq:chi-out-bound-H}
        \chi_\mathrm{in}(j)\le p_j+
        \biggl\|p_i\sum_{s=0}^\infty 
    [(H_p)^s]_{vu}{p_j}\biggr\|_1,
  \end{equation}
  where the first term accounts for the starting point $j$, summation
  over the arcs $v\equiv i\to i'$ and $u\equiv j'\to j$ is assumed,
  and the expression inside the norm is a vector with non-negative
  components labeled by the site index $i$.  Statement of the Theorem
  is obtained with the help of the standard norm expansion, also using
  $p_i\le1$.
\end{proof}
Notice that Theorem \ref{th:H-norm-one-bound} works for finite or
infinite digraphs. This implies that in- and out-cluster site-uniform
percolation threshold in an infinite (di)graph satisfy, respectively,
\begin{eqnarray}
  \label{eq:site-uniform-1norm-bound}
  p_c^\mathrm{(in)}&\ge&
  p_T^\mathrm{(in)}\ge \|H\|_1^{-1},\\
  \label{eq:site-uniform-1norm-out-bound}
  p_c^\mathrm{(out)}&\ge&
  p_T^\mathrm{(out)}\ge \|H^T\|_1^{-1}=\|H\|_\infty^{-1}.
\end{eqnarray}
Thus, Theorem \ref{th:H-norm-one-bound} gives a direct generalization
of the well-known maximum-degree bound\cite{Hammersley-1961},
\begin{equation}\label{eq:max-degree-bound}
p_c\ge (d_\mathrm{max}-1)^{-1},
\end{equation}
to heterogeneous site percolation on a digraph.

Notice that any induced matrix norm satisfies the inequality
$\|H_p\|\ge \rho(H_p)$, so that these bounds are generally weaker than
Eq.~(\ref{eq:threshold-H}).  Nevertheless, Example \ref{ex:two-region}
below shows that the bounds (\ref{eq:site-uniform-1norm-bound}) and
(\ref{eq:site-uniform-1norm-out-bound}) are tight: the finiteness of
the height ratio $\gamma_R$ is an essential condition in Theorem
\ref{th:chi-out-bound-A} and Theorem \ref{th:chi-out-bound-H}.

\begin{example}\label{ex:two-region}
  Consider a family of girth-$L$ random strongly connected oriented
  graphs with $2L^2$ vertices, parametrized by in/out degrees
  $d_1\ge d_2>1$ [see Fig.~\ref{fig:multiring}].  The graphs are
  constructed from $2L$ directed cycles of length $L$, by adding $D_i$
  randomly placed arcs from each site of the $i$\,th cycle to those of
  the cycle $(i+1)$, with the constraint that the in-degrees on the
  latter cycle are all the same (and equal to $D_{i}+1$).  We set
  $D_i=d_1-1$ for $i=0,\ldots,L-1$, and $D_i=d_2-1$ for
  $i=L,\ldots,2L-1$, with the sites of the last cycle connecting to
  those of the first.

  In the degree-regular case $d_1=d_2=d$, the PF vectors of the
  Hashimoto matrix have height ratios $\gamma_L=\gamma_R=1$, which
  implies that the in-/out-cluster susceptibilities for this family of
  digraphs is bounded for $p<\rho(H)^{-1}=1/d$.  
  
  More generally, for this family of digraphs, $\rho(A)-1$ equals the
  geometrical mean of the parameters $D_i$.  In the  case $d_1>d_2$, we get
  $\rho(A)=\rho(H)=1+[(d_1-1)(d_2-1)]^{1/2}$, with the height ratios
  $\gamma_L$ and $\gamma_R$ divergent exponentially with $L$.  In this
  case the  more general weaker bounds
  (\ref{eq:site-uniform-1norm-bound}) and
  (\ref{eq:site-uniform-1norm-out-bound}) apply, with
  $\|H\|_1=\|H\|_\infty=d_1> d_2$. 
  Numerically, for $d_1=3$ and $d_2=2$, we find
  $p_c^\mathrm{(out)}= 0.346\pm0.01$, see Fig.~\ref{fig:numerics}.    
\end{example}
\begin{figure*}[htbp]
  \centering
  \includegraphics[width=1.3\columnwidth]{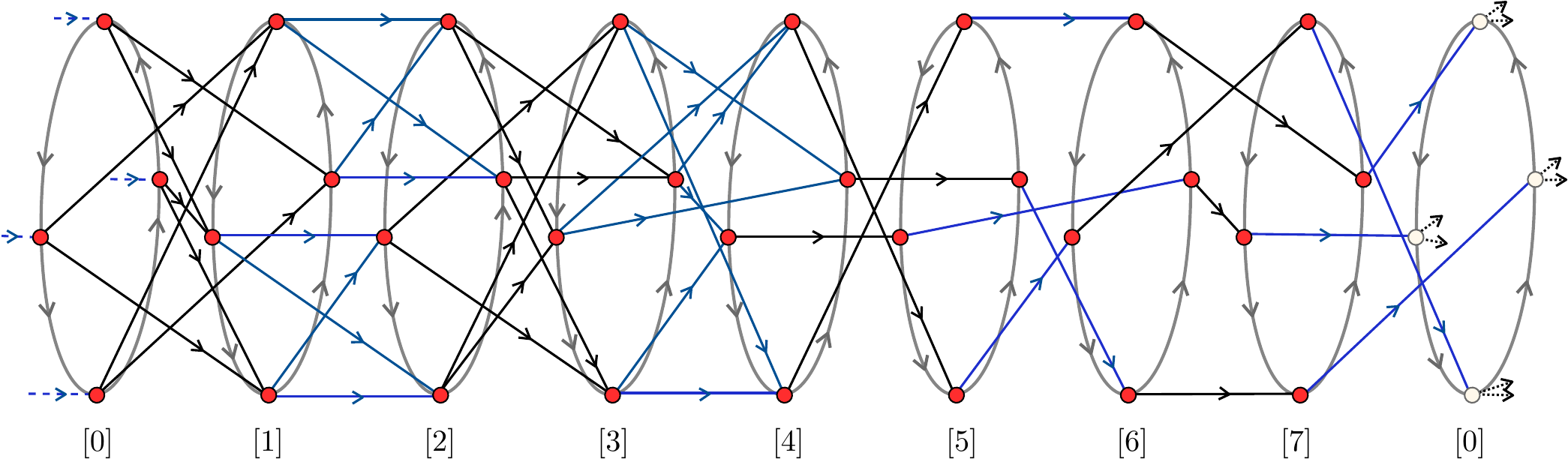}
  \caption{Sample of the two-region digraph from Example
    \ref{ex:two-region}. The digraph is constructed from $2L$ directed
    cycles of length $L$, by adding $D_i$ arcs from each vertex of
    $i$th to randomly selected vertices of the $(i+1)$st cycle, so
    that the in-degrees on cycle $i+1$ all equal $(D_i+1)$.  For $0\le
    i< L$, choose $D_i=d_1-1$, for $L\le i< 2L$ (last ring connects to
    first), $D_i=d_2-1$.  Shown is an actual sample with $L=4$,
    $d_1=3$, $d_2=2$.  For such graphs, at large $L$, in-/out-cluster
    percolation is determined by that in the denser region,
    $p_c^\mathrm{(out)}\gtrsim 1/d_1$ (assuming $d_1>d_2$), see text.}
  \label{fig:multiring}
\end{figure*}

\begin{figure}[htbp]
  \centering
  \includegraphics[width=0.9\columnwidth]{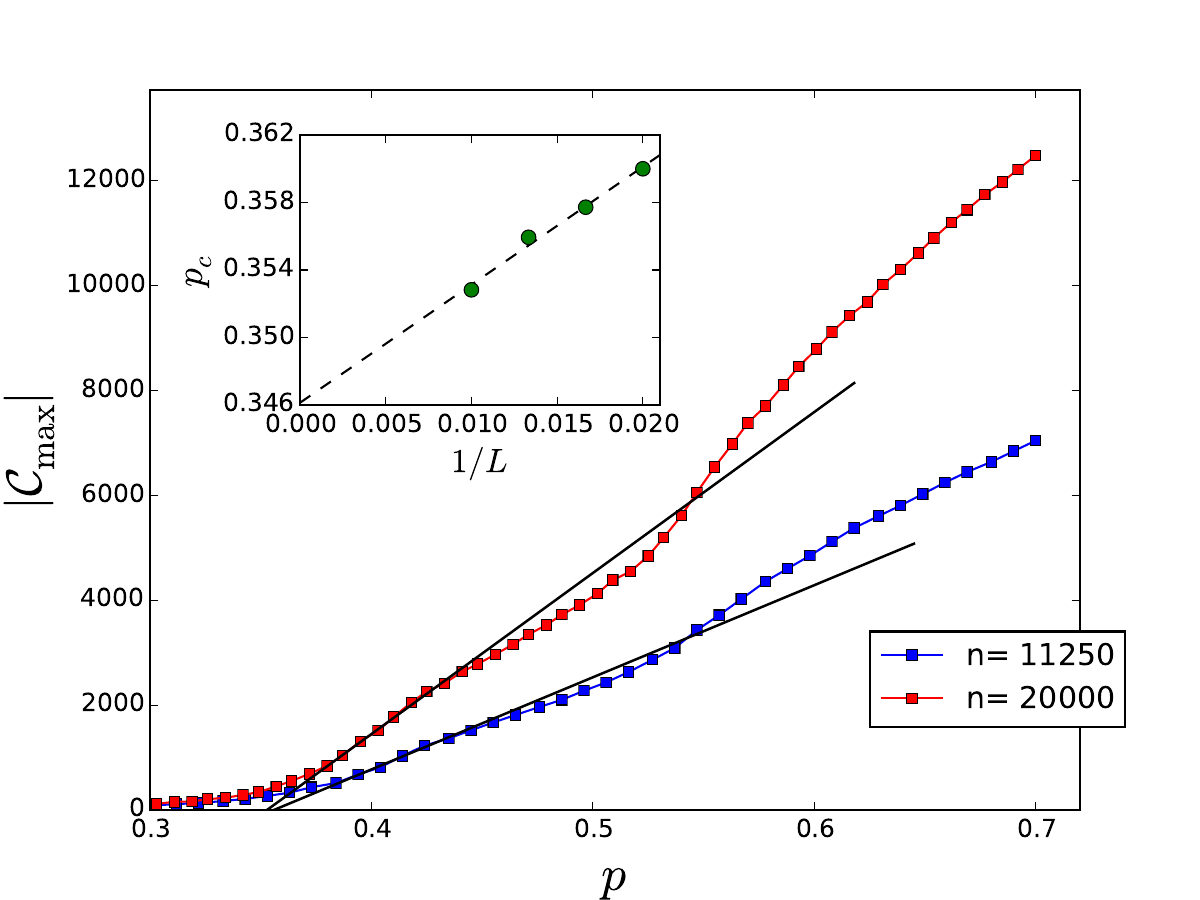}\\
  \caption{Averages of the largest out-cluster size as a function of
    on-site probability $p$ for digraphs in Example
    \ref{ex:two-region}, using $d_1=3$ and $d_2=2$.  Shown are the
    results for two digraphs, with $L=75$ and $L=100$, see
    Fig.~\ref{fig:multiring} for a smaller sample.  Each point is an
    average over 120 realizations of the open vertex configurations
    with given site probability; the corresponding standard errors are
    smaller than the point size.  Lines show the linear fits to the
    numerical data, taken in the regions where the cluster size
    fluctuations are large (not shown).  Inset shows the finite size
    scaling of the corresponding intercepts, giving the out-cluster
    percolation transition at $p_c^{\rm (out)}=0.346\pm0.01$.}
  \label{fig:numerics}
\end{figure}

The bound in Theorem \ref{th:H-norm-one-bound} is local and
independent of the size of the graph.  The following gives bounds for
the in-/out-cluster susceptibilities averaged over all sites:
\begin{theorem}
  \label{th:A-norm-q-bound}
  Let the $q$-norm ($q\ge 1$) of the weighted adjacency matrix
  (\ref{eq:Ap}) for heterogeneous site percolation on a digraph with
  $n$ vertices
  satisfy $\| A_p\|_q<1$.  Then the $q$-th power average of the
  out-cluster susceptibilities satisfies%
  \begin{equation}
    \label{eq:chi-out-bound-q-norm-A}
    \bigl\langle  \chi_\mathrm{out}^q\bigr\rangle
    \equiv {1\over n}\sum_{i=1}^n\chi_\mathrm{out}^q(i)\le \Bigl( 1-\|
      A_p\|_q\Bigr)^{-q}. 
  \end{equation}
\end{theorem}
\begin{proof}
  Write a version of the bound (\ref{eq:out-bound-sum})  as
  \begin{equation}
  \chi_\mathrm{out}(v)\le \sum_{s=0}^\infty [A_p^s ]_{vu}e_u,\label{eq:3} 
\end{equation}
where we introduce the vector with all-one components, $e_u=1$,
$u=1,\ldots,n$.  Consider this expression as an element of a vector
$\vec\chi$ of susceptibilities.  Taking the $q$-norm of this vector
and using the standard norm expansion, we get, in self-evident notations%
  \begin{eqnarray}\nonumber
    \|\vec \chi\|_q&\equiv &
    \left[n\bigl\langle
        \chi_\mathrm{out}^q\bigr\rangle\right]^{1/q}\le 
    \biggl\| \sum_s
      (A_p)^s\, e\biggr\|_q\\
    &\le& (1- \| A_p\|_q)^{-1}\|e\|_q  ={n^{1/q}\over  1- \|
      A_p\|_q}.\label{eq:out-bound-q-norm-one}
  \end{eqnarray}  
Eq.~(\ref{eq:chi-out-bound-q-norm-A}) immediately follows. 
\end{proof}
We note that one can also derive a version of
Eq.~(\ref{eq:chi-out-bound-q-norm-A}) with the non-backtracking
matrix.  However, in the cases where the $q$-norm of a large matrix
can be computed efficiently, $q=1$, $q=2$, and $q=\infty$, the
resulting bounds are superseded by Theorem \ref{th:H-norm-one-bound}:
it is easy to check that\cite{Bordenave-Lelarge-Massoulie-2015}
$\|H_p\|_2=\|H_p\|_1$, and $\|H_p^T\|_2=\|H_p\|_\infty$.

In the special case of heterogeneous site percolation on an undirected
graph $\mathcal{G}$ where the matrix $A_p$ is symmetric, $A_p=A_p^T$,
the two-norm of $A_p$ equals its spectral radius,
$\|A_p\|_2=\rho(A_p)$; in this case we obtain a stronger version of
Corollary \ref{th:undir-connectivity-bound} for undirected graphs:
\begin{corollary}
  \label{th:spectral-radius-generic-chi-bound}
  Let the spectral radius of the weighed adjacency matrix
  (\ref{eq:Ap}) for heterogeneous site percolation on an undirected
  graph with $n$ vertices satisfy $\rho(A_p)<1$.  Then cluster
  susceptibility at any site $i$ is bounded,
  \begin{equation}
    \chi(i)\le {n^{1/2}\over 1-\rho(A_p)}.\label{eq:chi-rho-A-bound}
  \end{equation}
\end{corollary}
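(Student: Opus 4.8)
The plan is to obtain the corollary as the $q=2$ specialization of Theorem~\ref{th:A-norm-q-bound}, combined with the symmetry of $A_p$ that is specific to an undirected graph. First I would observe that on an undirected graph there is no distinction between in- and out-clusters, so $\chi(i)=\chi_\mathrm{out}(i)$ for every vertex, and that symmetry of the weighted adjacency matrix, $A_p=A_p^T$, forces its induced two-norm to coincide with its spectral radius, $\|A_p\|_2=\rho(A_p)$. Under the hypothesis $\rho(A_p)<1$ the condition $\|A_p\|_q<1$ of Theorem~\ref{th:A-norm-q-bound} is therefore met with the choice $q=2$.

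Applying that theorem then yields the $\ell^2$-averaged bound
$$
\frac1n\sum_{i=1}^n\chi^2(i)\le\bigl(1-\rho(A_p)\bigr)^{-2},
$$
equivalently $\sum_{i=1}^n\chi^2(i)\le n\,(1-\rho(A_p))^{-2}$. The only remaining step is to pass from this average to a pointwise statement. Since every local susceptibility is nonnegative, $\chi(j)\ge0$, a single squared term is dominated by the full sum, $\chi^2(i)\le\sum_{j=1}^n\chi^2(j)$ for each fixed $i$; combining this with the displayed inequality and taking square roots gives
$$
\chi(i)\le\frac{n^{1/2}}{1-\rho(A_p)},
$$
which is exactly~(\ref{eq:chi-rho-A-bound}).

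I expect essentially no obstacle here: the corollary is an immediate specialization, and the only mildly nontrivial observation is that an averaged bound automatically controls each individual entry, because discarding the remaining nonnegative terms can only decrease the sum. The one step that genuinely requires care is the identification $\|A_p\|_2=\rho(A_p)$, since it relies on the symmetry of $A_p$ and hence on the undirected hypothesis; for a general digraph one would have only $\|A_p\|_2\ge\rho(A_p)$, and the clean replacement of the two-norm by the spectral radius would fail, which is precisely why the sharper form~(\ref{eq:chi-rho-A-bound}) is stated only for undirected graphs.
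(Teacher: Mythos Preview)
Your proposal is correct and follows exactly the approach indicated in the paper: specialize Theorem~\ref{th:A-norm-q-bound} to $q=2$, use symmetry of $A_p$ on an undirected graph to replace $\|A_p\|_2$ by $\rho(A_p)$, and then pass from the $\ell^2$ bound on the vector of susceptibilities to a pointwise bound via $\chi(i)\le\|\vec\chi\|_2$. The paper states only the setup (``in this case we obtain'') and leaves the last step implicit; you have filled it in correctly.
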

This bound guarantees the absence of a giant component in a
sufficiently large graph with $\rho(A_p)<1$.  The scaling with $n$ is
not an artifact of the approximation; this is illustrated by the
following
\begin{example}[Percolation on a rooted tree]
\label{ex:rooted-tree}
Consider an undirected $r$-generation $D$-ary rooted tree with 
 $$n=1+D+\ldots +D^{r}={D^{r+1}-1\over D-1}$$ vertices. Homogeneous
 site percolation 
 with probability $p$ gives the susceptibility at the root of the tree
  $$\chi(0)=p[1+pD+\ldots+(pD)^{r}]=p{(pD)^{r+1}-1\over
    pD-1}.$$ In the limit $r\to \infty$ percolation threshold is
  $p_T=1/D$, same as $p_c$ for the infinite tree $T_d$, $d=D+1$.  On
  the other hand, the spectral radius of the adjacency matrix
  is $$\rho(A)=2\sqrt D\cos\bigl(\pi/(r+1)\bigr) \le 2\sqrt D.$$ 
  For $p=1/\rho(A)>D^{-1}$ (in the case $D>4$), at large $n$,
  $\chi(0)={\cal O}(n^{1/2})$, in agreement with the
  bound~\ref{eq:chi-rho-A-bound}. 
\end{example}

\section{Bounds for infinite digraphs and digraph sequences}
\label{sec:infinite}
\subsection{Bounds for infinite digraphs}
\label{sec:infinite-digraph}

In the majority of graph theory applications, one is interested in
perhaps large but nevertheless finite graphs, considering the
percolation transition as an idealization of a finite-graph crossover
which gets sharper as $n\to\infty$.  It would be conceptually easiest
to define an infinite limit by considering a sequence of increasing
induced subgraphs,
$\mathcal{V}(\mathcal{G}_i)\subset \mathcal{V}(\mathcal{G}_{i+1})$,
such that every vertex in the infinite graph $\mathcal{G}$ is
eventually covered.  However, this is not necessarily a good idea.  In
addition to regular $D$-dimensional lattices (e.g., $\mathbb{Z}^D$),
or graphs which can be embedded in $\mathbb{Z}^D$ with finite
distortions, there are many important graph families which are, in
effect, infinite-dimensional.  Such graphs are characterized by a
non-zero Cheeger constant, 
\begin{equation}
  \label{eq:cheeger}
  h_G=\min_{\mathcal{G}'\subset \mathcal{G}}{|\partial G'|\over
    \min(|\mathcal{G}'|,|\mathcal{G}\setminus\mathcal{G}'|)}, 
\end{equation}
where $|\partial \mathcal{G}'|$ is the size of the boundary, number of
edges connecting the subgraph $\mathcal{G}'$ and its complement,
$\mathcal{G}\setminus\mathcal{G}'$, and, e.g., $|{\cal G}'|$ is the
number of vertices in the subgraph.  An infinite graph with a non-zero
Cheeger constant is called \emph{non-amenable}.  For such a graph, a
boundary of a subgraph may dramatically affect its properties, 
including the nature and the location of the percolation transition in
the infinite-size limit.

An important property of a non-amenable graph $\mathcal{G}$ is that
even if the analog $\xi$ of the PF eigenvector of the graph's
adjacency matrix $A$ can be constructed, $A\xi=\lambda_{\rm max}\xi$
(e.g., in the case of a quasi-transitive graph), $\xi$ cannot be
approximated by a sequence of normalizable vectors.  Moreover, if one
considers $A$ as a bounded operator acting in the Hilbert space
$l^2(\mathcal{V})$ of finite-two-norm vectors in the vertex space
$\mathcal{V}(\mathcal{G})$, the PF eigenvalue $\lambda_\mathrm{max}$
lies outside of the spectrum $\sigma(A)$.  Indeed, if we
take any finite subset $\mathcal{W}\subset \mathcal{V}$, the
corresponding finite-support approximation $\xi'$ of $\xi$ obtained by
setting $\xi_v'=\xi_v$ for $v\in\mathcal{W}$, and zero otherwise,
violates the bonds along the boundary $\partial W$. As a result, the
ratio
\begin{equation}
{\| A \xi'-\lambda_\mathrm{max}\xi'\|_2\over 
  \|\xi'\|_2}
\label{eq:l2-ratio}
\end{equation}
does not become arbitrarily small, no matter how large the support
$|\mathcal{W}|$ is.  
In comparison, the ratio~(\ref{eq:l2-ratio}) can
be made arbitrarily small in an amenable graph with $h_\mathcal{G}=0$.
To distinguish from the usual PF spectral radius (with a
non-normalizable eigenvector), 
we will denote the spectral radius of $A$ treated as an operator in
$l^2(\mathcal{V})$ by $\rltwo(A)\equiv
\sup\{|\lambda|:\lambda\in\sigma(A)\}$, with the spectrum%
\begin{equation}
\sigma(A)\equiv \{\lambda: (\lambda I-A)^{-1} \text{ is not bounded in
}l^2(\mathcal{V})\}.\label{eq:spectrum-A} 
\end{equation}

Regardless of these complications, extending our percolation bounds to
infinite digraphs is conceptually simple.  Indeed, e.g., in the case of
uniform undirected percolation, the key
ob\-servation\cite{Hamilton-Pryadko-PRL-2014} is that percolation
transition $p_c$ on any graph $\mathcal{G}$ cannot be lower than that
on the universal cover $\widetilde{\mathcal{G}}$, $p_c\ge \tilde p_c$.
Universal cover is a tree, and the percolation transition on a locally
finite tree $\mathcal{T}$ can be found in terms of its branching
number\cite{Lyons-1990}, $\tilde p_c=1/\br({\cal T})$.  It turn, the
branching number can be bounded with the \emph{growth} of the
tree\cite{Lyons-1990},
\begin{equation}
  \label{eq:growth}
  \gr{\mathcal{T}}=\inf \Bigl\{\lambda>0: \liminf_{m\to\infty}
  \lambda^{-m}|S_m(i)|=0\Bigr\}, 
\end{equation}
where $|S_m(i)|$ is the number of sites at distance $m$ from the
chosen origin $i$; this definition is independent on the choice of the
initial site $i$.  Notice that when the tree
$\mathcal{T}=\widetilde{\cal G}$ is the universal cover of a graph
$\mathcal{G}$, $|S_m(i)|$ is a number of length-$m$ non-backtracking
paths starting at $i$; this number is the same on a graph and its any
cover and can be counted with the help of the Hashimoto matrix.

To construct a similar bound for the case of heterogeneous site
percolation on a locally-finite digraph, we similarly associate
\emph{growth} with any non-negative matrix,
\begin{equation}
  \label{eq:growth-matrix}
  \gr H=\max_a\inf \Bigl\{\lambda>0: \liminf_{m\to\infty}
  \lambda^{-m}\sum_{b}[H^m]_{ab}=0\Bigr\}.   
\end{equation}
It is a simple exercise to check that on a strongly-connected digraph
the infinum is independent of the chosen arc $a$.  We prove the following
generalization of the bound~(\ref{eq:threshold-H}):
\begin{theorem}
  \label{th:general-percolation-bound}  
  Consider heterogeneous site percolation on a strongly connected
  locally-finite digraph $\mathcal{G}$.  Let on-site probabilities be
  such that the growth of the corresponding weighted Hashimoto matrix
  satisfy $\gr H_p<1$.  Then, with probability one, any out cluster on
  the open subgraph is finite.
\end{theorem}
The proof is a variant of the first half of the proof of Theorem 6.2
in Ref.~\cite{Lyons-1990}.
\begin{proof} Suppose an open out-cluster for a chosen vertex
  $v\in\mathcal{V}$ is infinite.  Then, there should be a simple path
  from $v$ to infinity, entirely constructed from open vertices.  Then
  there is also a directed SAW from some arc $a\in\mathcal{A}$ leaving
  $v$ to infinity, which implies that for any $m>0$, the number of
  points reachable from $a$ in $m$ steps should be at least one.
  Given the probability $\theta_{\rm out}(v)\ge0$ to have an infinite
  out-cluster starting from $v$, we have the following uniform bound
  for any $m>0$,
  \begin{equation}
    \label{eq:Hp-power-sum}
    p_v d_v\max_{a\equiv v\to u}\sum_{b\in\mathcal{A}} [H_p^m]_{ab}\ge
\theta_\mathrm{out}(v).   
\end{equation}
By assumption $\gr H_p<1$, thus
$\liminf_{m\to\infty}\sum_b [H_p^m]_{ab}=0$ for any arc
$a\in\mathcal{A}$.  That is, there should be an increasing sequence
$m_i$, $i=1,2,\ldots$, such that the corresponding limit is zero.
Then, Eq.~(\ref{eq:Hp-power-sum}) gives $\theta_\mathrm{out}(v)=0$,
QED.
\end{proof}

Similarly, to get a bound on the transition associated with divergent
susceptibility, we need to ensure the convergence of the series
similar to that in Eq.~(\ref{eq:2}).  To this end, we introduce the
\emph{uniformly bounded growth}\cite{Angel-Friedman-Hoory-2015},
\begin{equation}
  \label{eq:bargr}
  \bargr{H}\equiv \inf\Bigl\{\lambda>0:
  \limsup_{m\to\infty}\lambda^{-m}\sup_u\sum_v 
  [H^m]_{uv}=0\Bigr\}.
\end{equation}
In particular, if $H$ is a Hashimoto matrix of a tree,
then\cite{Lyons-1990} $p_T=1/\bargr H$.  More generally, we have
\begin{theorem}
  \label{th:out-cluster-infinite}
  Consider heterogeneous site percolation on a bounded-degree digraph 
  $\mathcal{D}$, characterized by the weighted Hashimoto matrix $H_p$.
  If\, $\bargr H_p<1$, then there exists a constant $C>0$ such that
  for any $v\in{\cal V}({\cal D})$, 
  $\chi_\mathrm{out}(v)\le 1+C\od(v)$.
\end{theorem}

\begin{proof}
  The definition (\ref{eq:bargr}) implies that for any
  $\lambda>\bargr H_p$ and any $\epsilon>0$, there exists $m_0$ such
  that for all $m>m_0$,
  \begin{equation}
   \lambda^{-m} \sup_{a\in{\cal A}({\cal D})}\sum_{b\in{\cal A}({\cal
       D})} [H_p^m]_{ab}<\epsilon .\label{eq:6} 
  \end{equation}
  Take $\epsilon=1$ and $\lambda=(1+\bargr H_p)/2<1$, and define a
  constant 
  \begin{equation}
    A\equiv \sum_{m=0}^{m_0}   \sup_a\sum_b
    [H_p^m]_{ab}. \label{eq:7}  
  \end{equation}
  The supremum is finite for any finite $m$ by the assumption of a
  finite maximum degree.  Now, use our usual bound for the out-cluster
  susceptibility, in terms of all non-backtracking paths from
  $v\in {\cal V}({\cal D})$, counted using the powers of $H_p$.
  Summation over $m$ gives the stated bound with
  $C=A+\lambda^{m_0+1}/(1-\lambda)$.
\end{proof}

In comparison, the connectivity is bounded in terms of the $l^2(\mathcal{A})$
spectral radius $\rltwo(H_p)$:
\begin{theorem}
  \label{th:connectivity-bound-infinite}
  Consider heterogeneous site percolation on a locally-finite digraph
  $\mathcal{D}$ characterized by the weighted Hashimoto matrix $H_p$.
  Then, if $\rho\equiv \rltwo(H_p)<1$, there exists a base $\rho'<1$
  and a constant $C\ge(1-\rho')^{-1}$ such that the directed
  connectivity between arcs $u$ and $v$ decays exponentially with the
  distance $d(u,v)$,
  \begin{equation}
    \label{eq:tau-bound-general}
    \tau_{u,v}\le C (\rho')^{d(u,v)}. 
  \end{equation}
\end{theorem}
\begin{proof}
  Start with our usual bound in terms of the weighted Hashimoto
  matrix, 
  \begin{equation}
    \label{eq:connectivity-bound}
    \tau_{u,v}\le \sum_{m\ge d(u,v)} [H_p^m]_{u,v}; 
  \end{equation}
  the series converges since $\rho<1$ by assumption.  Notice that in
  $l^2(\mathcal{A})$, the spectral radius of $H_p$ can be defined as
  the limit,
  \begin{equation}
  \rho=\lim_{m\to\infty}\|H_p^m\|^{1/m},\label{eq:rho-norm-limit} 
\end{equation}
where 
  $\|H_p^m\|^{1/m}\ge\rho$ for every $m>0$.   Then, for any
  $\epsilon>0$, there exists $m_0$ such that
  $\|H_p^m\|^{1/m}<\rho+\epsilon$ for all $m\ge m_0$.  Choose
  $\epsilon=(1-\rho)/2$, define $\rho'=\rho+\epsilon=(1+\rho)/2$, and
  also the constant $B=\max_{0\le m<m_0} \|H_p^m\|/(\rho')^m$, $B\ge1$.   The
  statement of the theorem is satisfied with
  $C=(1-\rho')^{-1}B$. 
\end{proof}

 The spectral radius
$\rltwo(H_p)$ satisfies the following  bounds:
\begin{statement}
  \label{th:Hp-bounds}
Consider heterogeneous site percolation on a locally-finite digraph
$\mathcal{D}$ characterized by the weighted Hashimoto matrix $H_p$,
and the corresponding problem on the universal cover $\widetilde{\cal
  D}$ characterized by the matrix $\tilde{H}_p$.  The following bounds
are true:
\begin{equation}
  \label{eq:Hp-bounds}
  \rltwo(\tilde H_p)\le   \rltwo( H_p)\le \bargr(H_p). 
\end{equation}
\end{statement}
\begin{proof}
  For any $\lambda\neq 0$ and $u\in \mathcal{A}$, define the vectors
  $\xi^{(m)}(u)$ with components 
  $\xi_v^{(m)}(u)\equiv\lambda^{-m}[H_p^m]_{uv}$ and
  $\xi(u)\equiv \sum_{m\ge0} \xi^{(m)}(u)$.  The parameter
  $\lambda\neq0$ is outside of the spectrum of $H_p$ iff the series
  $\sum_{m\ge0}H_p^m/\lambda^m$ define a bounded operator on
  $l^2(\mathcal{A})$ [cf.\ Eq.\ (\ref{eq:spectrum-A})].  Equivalently,
  since individual arcs form a basis of $l^2(\mathcal{A})$, for any
  $u\in\mathcal{A}$, the vector $\xi(u)$ should have a finite norm
  $\|\xi(u)\|_2$.  Since the matrix elements $(H_p)_{uv}\ge0$, the
  sum in Eq.~(\ref{eq:bargr}) is the 1-norm of $\xi^{(m)}(u)$.  On the
  other hand, for any  $\lambda>\bargr H_p$, we can write 
  \begin{equation}
    \label{eq:5}
    \| \xi(u)\|_2\le \sum_{m\ge 0}\|\xi^{(m)}(u)\|_2\le 
    \sum_{m\ge 0}\|\xi^{(m)}(u)\|_1, 
  \end{equation}
  where the rightmost series is convergent since it is asymptotically
  majored by the sum of $(\bargr H_p/\lambda)^m$; this proves the
  upper bound.  To prove the lower bound, consider a similarly defined
  vector $\tilde \xi(u)$ on the universal cover; components $\xi_v(u)$
  are the sums of the non-negative components of $\tilde\xi(u)$
  corresponding to different non-backtracking paths from $u$ to $v$;
  we thus have $\|\tilde \xi(u)\|_2\le \|\xi(u)\|_2$.  Thus, a point
  outside the spectrum $\sigma(H_p)$ is also outside the spectrum
  $\sigma(\tilde{ H}_p)$.
\end{proof}

Further, in the case of homogeneous percolation, the spectral radius 
$\rltwo(\tilde H)=[\br \widetilde{\cal D}]^{1/2}$ is exactly the
tree's point spectral radius\cite{Lyons-1990}, where the branching
number $\br \widetilde {\cal D}\le \gr H\le \bargr H$.  Also, for any
quasi-transitive digraph $\mathcal{D}$, we have
$\br \widetilde {\cal D}=\bargr {H}$; the corresponding value can be
found as the spectral radius $\rho(H')$ for a finite graph.

The following Theorem is a generalization of Theorem 3.9 from
Ref.~\cite{Lyons-review-2000}, attributed to O.\ Schramm.
\begin{theorem}
  \label{th:sac-divergent}
  Consider heterogeneous site percolation on an infinite locally
  finite digraph $\mathcal{D}$, with site probabilities bounded from
  above, $p_v\le p_\mathrm{max}<1$, $v\in\mathcal{V}(\mathcal{D})$.
  Then, if strong percolation occurs, and a strongly connected
  infinite open cluster is unique with probability one, the SAC
  susceptibility is unbounded,
  \begin{equation}
  \sup_{a\in\mathcal{A}}\chi_{\rm SAC}(a)=\infty.\label{eq:sac-divergent}
\end{equation}
\end{theorem}
\begin{proof}
  First, uniqueness of the infinite strongly-connected open cluster
  $K\subseteq\mathcal{D}'$ implies that, with probability one, $K$ is
  one-ended: it cannot be separated into two or more strongly
  connected infinite components by removing any finite set of
  vertices.  Indeed, otherwise, we would have a non-unique infinite
  cluster with a finite probability, which contradicts the assumption.
  Second, with probability one, $K$ contains two disjoint strong rays.
  Indeed, let us assume that not to be the case.  Then, according to
  Menger's theorem, for any $v\in\mathcal{V}(\mathcal{D}')$, the open
  subgraph would have an infinite number of single-vertex cut sets
  separating $v$ from infinity or infinity from $v$.  This would imply
  $\theta_\mathrm{str}(v)=0$, counter to the assumption.  The
  one-endedness of $K$ implies that outside of any finite ball, the
  two strong rays must remain strongly connected with each other.
  This means that with probability one, for some $a\in\mathcal{A}$,
  the open strongly-connected cluster $K$ contains an infinite number
  of simple cycles passing through $a$, which implies
  Eq.~(\ref{eq:sac-divergent}).
\end{proof}

Notice that an upper bound on probabilities $p_v$ is an essential
condition.  This eliminates the case of a digraph with a
ray whose vertices all have $p_v=1$.

\subsection{Bounds for sequences of finite digraphs}
In Sections \ref{sec:upper-spec-bounds} and
\ref{sec:upper-spec-conn-bounds}, we constructed several upper bounds
for susceptibilities and connectivity in heterogeneous percolation on
finite digraphs, formulated in terms of the spectral radius of the
corresponding weighted Hashimoto matrix $H_p$ and the associated PF
vectors $\eta_L$, $\eta_R$.  In contrast, bounds in Section
\ref{sec:infinite-digraph} are formulated directly on infinite
digraphs.   We would like to see the correspondence between these
bounds for weakly convergent digraph sequences. 

Given an infinite graph $\mathcal{G}(\mathcal{V},\mathcal{E})$, we say
that a sequence of graphs $\mathcal{G}^{(t)}$, $t=1,2,\ldots$,
(weakly) converges to $\mathcal{G}$ near the origin
$v_0\in\mathcal{V}$, if for any $R>0$ there is $t_0$ such that the
radius-$R$ vicinity of $v_0$ on $\mathcal{G}$ for every $t\ge t_0$ is
isomorphic to a subgraph of $\mathcal{G}^{(t)}$.  For heterogeneous
site percolation, we require $p_v$ to match on corresponding sites.
For digraphs, we also require the bond directions to match (while
using undirected distance to define the ball).  In the following, when
discussing a sequence of digraphs, objects referring to the digraph
$\mathcal{D}^{(t)}$ are denoted with the corresponding superscript,
e.g., the weighted Hashimoto matrix $H_p^{(t)}$ and its right PF
vector $\eta^{(t)}\in l^2(\mathcal{A}^{(t)})$.  We will also use the
same notation for the corresponding vector mapped to $\mathcal{A}$
under the isomorphism map, $\eta^{(t)}\in l^2(\mathcal{A})$, dropping
any arcs not in $\mathcal{A}$, and adding zeros for arcs not in
$\mathcal{A}^{(t)}$.

We first compare the bounds for the transition associated with
emergence of infinite cluster and divergent out-cluster
susceptibilities.  The bound in Theorem \ref{th:chi-out-bound-H} is
formulated in terms of the spectral radius $\rho(H_p)$ of the
Hashimoto matrix and a prefactor $C_2(\eta_R)$, while bounds in
Theorems \ref{th:general-percolation-bound} and
\ref{th:out-cluster-infinite} are formulated in terms of the growth
$\gr H_p$ and uniformly-bounded growth $\bargr H_p$.  A sufficient
condition for continuity between these bounds is given by the
following:
\begin{statement}
  \label{th:rho-limit-gamma-bounded}
  Consider heterogeneous site percolation on an infinite digraph
  $\mathcal{D}$, characterized by the weighted Hashimoto matrix $H_p$,
  and a sequence of finite digraphs $\mathcal{D}^{(t)}$ with strongly
  connected OLGs, converging to $\mathcal{D}$ around some
  origin. Then, if the right PF vectors of $H_p^{(t)}$ have uniformly
  bounded height ratios, $\gamma(\eta_R^{(t)})\le M$, the
  following limit exists, and
  \begin{equation}
    \label{eq:rho-limit-gamma-bounded-equation}
    \lim_{t\to\infty}\rho(H_p^{(t)})= \gr(H_p)= \bargr(H_p).    
  \end{equation}
\end{statement}
\begin{proof}
  For each $\mathcal{D}^{(t)}$, the corresponding universal cover
  $\widetilde{\cal D}^{(t)}$ is a quasi-transitive tree; a lift
  $\tilde\eta_R^{(t)}$ of the PF vector $\eta_R^{(t)}$ is the
  eigenvector of $H_p^{(t)}$ with all positive components.  For a
  given $u\in\mathcal{A}$ and $t$ large enough, the conditions
  guarantee that the radius-$m$ vicinity of $u$ on $\mathcal{D}$ is
  entirely within the subgraph of $\mathcal{D}^{(t)}$ isomorphic with
  that of $\mathcal{D}$; the same is true for the universal covers.
  We can therefore construct the upper and lower bounds on the sum in
  Eq.~(\ref{eq:bargr}) in terms of the right PF vectors $\eta^{(t)}$,
\begin{equation}
  \label{eq:Hm-bound}
  {1\over \gamma(\eta_R^{(t)})}  [\rho(H_p^{(t)})]^m\le  \sum_v [H_p^m]_{uv}
  \le \gamma(\eta_R^{(t)}) 
  [\rho(H_p^{(t)})]^m,
\end{equation}
or, using the assumed uniform bound on the height ratios,
\begin{equation}
  \label{eq:Hm-bound-uniform}
  {1\over M}  [\rho(H_p^{(t)})]^m\le  \sum_v [H_p^m]_{uv}
  \le  M    [\rho(H_p^{(t)})]^m.
\end{equation}
Now, let us choose a subsequence with spectral radii converging to
$\liminf_{t\to\infty}\rho(H_p^{(t)})$.  Using only these digraphs in the
upper bound~(\ref{eq:Hm-bound-uniform}), the definition
(\ref{eq:bargr}) implies
$$\bargr H_p\le \liminf_{t\to\infty}\rho(H_p^{(t)}).$$  The same
calculation can be repeated for the lower bound, with a subsequence of
graphs whose spectral radii converge to the corresponding superior
limit; we get  
  \begin{equation}
    \label{eq:rho-limit-gamma-bounded}
    \limsup_{t\to\infty}\rho(H_p^{(t)})\le\gr (H_p)\le \bargr(H_p)\le 
    \liminf_{t\to\infty}\rho(H_p^{(t)}).    
  \end{equation}
  This implies that the limit exists and satisfies
  Eq.~(\ref{eq:rho-limit-gamma-bounded-equation}). 
\end{proof}

Similarly, for a finite digraph, Corollary
(\ref{th:undir-connectivity-bound-H}) gives strong connectivity
 exponentially decaying with the distance for $\rho(H_p)<1$,
as long as the corresponding OLG is locally strongly connected.  If we
consider a sequence of such finite digraphs converging to an infinite
digraph around some origin, we expect exponential decay of strong
connectivity for $ \liminf_{t\to\infty} \rho(H_p^{(t)})<1$, with a
bounded prefactor.  On the other hand, Theorem
\ref{th:connectivity-bound-infinite} gives exponential decay of
directed connectivity with the distance on an infinite digraph with
$\rltwo(H_p)<1$, without an explicit bound on the prefactor.  The
following gives partial correspondence between these results for
undirected graphs:
\begin{statement}
  Consider heterogeneous site percolation on a locally-finite infinite
  connected graph $\mathcal{G}$, characterized by the weighted
  Hashimoto matrix $H_p$, and a sequence of finite graphs
  $\mathcal{G}^{(t)}$ converging to $\mathcal{G}$ around some origin.
  If the ratios of the components of the right PF vectors $\eta^{(t)}$
  of $H_p^{(t)}$ corresponding to each arc and its inverse are
  uniformly bounded by a fixed $M\ge1$,
  \begin{equation}
    \label{eq:gamma-ratio-bound}
    M^{-1}\le {\eta^{(t)}_a\over \eta^{(t)}_{\bar a}}\le M, \text{ then
    } \liminf_{t\to\infty}\rho(H_p^{(t)})\ge \rltwo(H_p).
  \end{equation}  
\end{statement}
\begin{proof}
  Define the parity operator as in
  Ref.~\cite{Bordenave-Lelarge-Massoulie-2015},
  $P_{ab}=\delta_{\bar a, b}$, to connect each arc $a$ with its inverse
  $\bar a$; we have $PH_pP=H_p^T$.  The condition of the theorem
  allows to use the components of $P\eta^{(t)}$ as a lower or an upper
  bound on those of $\eta^{(t)}$.  It also guarantees that the OLGs of
  the graphs $\mathcal{G}^{(t)}$ are strongly connected.  For any
  $m>0$ and $v\in\mathcal{A}$, there exists $t_0$ such that for all
  $t>t_0$, the radius-$m$ vicinity of $v$ will be isomorphic to a
  subgraph of $\mathcal{G}^{(t)}$.  If we introduce the vector $e_v$
  with the only non-zero component (equal to one) at the arc $v$, we
  can write for $t>t_0$,
  \begin{equation}
    \label{eq:two-norm-bound}
    \| H_p^m e_v\|_2^2= e_v^T PH_p^mP H_p^m e_v= e_v^T
    P(H_p^{(t)})^mP (H_p^{(t)})^m e_v ,
  \end{equation}
where we used the same notation for the corresponding vector under the
isomorphism map. 
We can now use the PF vector $\eta^{(t)}$ to construct the upper bound,%
\begin{equation}
  \label{eq:two-norm-bound-two}
      \| H_p^m e_v\|_2^2\le e_v^T
    P(H_p^{(t)})^mP (H_p^{(t)})^m \eta^{(t)}{1\over \eta^{(t)}_v}\le M^2
    [\rho(H_p^{(t)})]^{2m}, 
\end{equation}
where we used the identity $[P\eta^{(t)}]_v\le M\eta^{(t)}_v$ twice.
This shows that for any $v\in\mathcal{A}$ and any
$\lambda>\liminf_{t\to\infty}\rho(H_p^{(t)})$, 
the two-norm of 
$H_p^me_v /\lambda^m$ converges to zero, i.e., the operator
$(\lambda I-H_p)^{-1}$ is bounded in $l^2(\mathcal{A})$, thus
$\rltwo(H_p)\le \liminf_{t\to\infty}\rho(H_p^{(t)})$. 
\end{proof} 
We note that for an increasing sequence of subgraphs of $\mathcal{G}$,
$\rho(H^{(t)}_p)\le \rho_{l^2}(H_p)$ in a non-decreasing bounded
function of $t$.  Also, the condition on the PF vector can be
guaranteed by Lemma~\ref{th:lemma-local-str-connectivity}.  This
implies
\begin{corollary}
  \label{th:subgraph-sequence-local}
  Consider heterogeneous site percolation on a  locally-finite infinite graph
  $\mathcal{G}$ whose OLG is locally strongly connected, i.e., for
  every arc $a\in\mathcal{A}(\mathcal{G})$, there is a
  non-backtracking path of length at most $\ell$ from $a$ to $\bar a$.
  Assume that the site probabilities are bounded from below,
  $p_v>p_\mathrm{min}>0$.  Consider any increasing sequence of finite
  subgraphs
  $\mathcal{G}^{(t)}\subset \mathcal{G}^{(t+1)}\subset \mathcal{G}$,
  convergent to ${\cal G}$ around some origin.  Then,
  \begin{equation}
\lim_{t\to\infty}\rho({H_p^{(t)}})=\rltwo(H_p).\label{eq:rho-limit-two}
\end{equation}

\end{corollary}

Our final result is a bound on the SAC susceptibility for a weakly
convergent sequence of finite digraphs, and an associated bound for
the transition associated with the number of strongly-connected
infinite clusters.  We notice that the SAC susceptibility counts only
finite-length cycles.  This implies, most generally:
\begin{theorem}
  \label{th:SAC-bound-infinite}
  Consider heterogeneous site percolation on an infinite locally finite
  digraph $\mathcal{D}$, characterized by the weighted Hashimoto
  matrix $H_p$, and an increasing sequence of finite subgraphs
  $\mathcal{D}^{(t)}\subset \mathcal{D}^{(t+1)}\subset\mathcal{D}$
  converging to $\mathcal{D}$ around some origin.  Consider
  $\rho_0\equiv \lim_{t\to\infty}\rho(H_p^{(t)})$.  Then, if
  $\rho_0<1$, the SAC susceptibility at any arc $a\in\mathcal{A}$ is
  bounded,
  \begin{equation}
    \label{eq:SAC-limit-bounded}
    \chi_{\rm SAC}(a)\le (1-\rho_0)^{-1}.     
  \end{equation}
\end{theorem}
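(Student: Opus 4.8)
The plan is to reduce the infinite-graph statement to the finite-graph bound of Theorem~\ref{th:chi-SAC-bound}, exploiting the locality of finite cycles stressed just above the statement: every self-avoiding cycle has finite length, and on a locally finite digraph a SAC of length $s$ through the fixed arc $a$ is confined to a bounded vicinity of the origin. I would first write the SAC susceptibility on $\mathcal{D}$ as a sum over lengths,
\begin{equation}
  \chi_\mathrm{SAC}(a)=\sum_{s\ge1}N_s(a),
  \label{eq:sac-sum-infinite}
\end{equation}
where, exactly as in the proof of Theorem~\ref{th:chi-SAC-bound}, $N_s(a)$ denotes the expected number of length-$s$ SACs through $a$ on the open subgraph $\mathcal{D}'$. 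Local finiteness guarantees that each $N_s(a)$ is a finite sum of open-cluster probabilities, one for each length-$s$ SAC through $a$.

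The central step is a truncation-and-matching argument. I would fix a cutoff $S$ and examine the partial sum $\sum_{s=1}^{S}N_s(a)$. Any SAC of length $s\le S$ through $a$ visits only vertices within undirected distance $S$ of the origin. By the definition of convergence of the subgraph sequence $\mathcal{D}^{(t)}$, there is $t_0$ such that for all $t\ge t_0$ the radius-$S$ vicinity of the origin on $\mathcal{D}$ sits inside $\mathcal{D}^{(t)}$, with the site probabilities $p_v$ matched on corresponding vertices. Since $\mathcal{D}^{(t)}\subset\mathcal{D}$, this makes the two collections of length-$(s\le S)$ SACs through $a$ coincide, carrying identical open-cluster probabilities, so $N_s(a)=N_s^{(t)}(a)$ for all $s\le S$ (and in particular $a\in\mathcal{A}(\mathcal{D}^{(t)})$).

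It then remains to feed these into the finite-graph bound. Because the $\mathcal{D}^{(t)}$ are nested subgraphs, the $H_p^{(t)}$ satisfy $H_p^{(t)}\le H_p^{(t+1)}$ entrywise after embedding into the larger arc set, so $\rho(H_p^{(t)})$ is nondecreasing with limit $\rho_0<1$; thus each $\mathcal{D}^{(t)}$ meets the hypothesis $\rho(H_p^{(t)})<1$ of Theorem~\ref{th:chi-SAC-bound}, and for $t\ge t_0$
\begin{equation}
  \sum_{s=1}^{S}N_s(a)=\sum_{s=1}^{S}N_s^{(t)}(a)\le\chi_\mathrm{SAC}^{(t)}(a)\le\bigl[1-\rho(H_p^{(t)})\bigr]^{-1}\le(1-\rho_0)^{-1}.
  \label{eq:sac-partial-infinite}
\end{equation}
As this holds uniformly in $S$, letting $S\to\infty$ in (\ref{eq:sac-sum-infinite}) yields $\chi_\mathrm{SAC}(a)\le(1-\rho_0)^{-1}$. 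The hard part will be the locality claim $N_s(a)=N_s^{(t)}(a)$: I must verify carefully that the convergence hypothesis reproduces not merely the graph structure but also the weights inside the radius-$S$ ball, so that passing between $\mathcal{D}$ and $\mathcal{D}^{(t)}$ neither creates nor destroys a SAC through $a$ nor alters its probability. If monotonicity of $\rho(H_p^{(t)})$ were unavailable, one could instead take $t=t(S)\to\infty$ and use $\rho(H_p^{(t(S))})\to\rho_0$ in (\ref{eq:sac-partial-infinite}) to reach the same conclusion.
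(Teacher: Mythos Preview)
Your proposal is correct and follows essentially the same approach as the paper: locality of finite-length cycles, monotonicity of $\rho(H_p^{(t)})$ for nested subgraphs, and the finite-graph bound of Theorem~\ref{th:chi-SAC-bound}. The only cosmetic difference is that the paper bounds each term individually, $N_s(a)\le\rho_0^s$ (using the intermediate inequality~(\ref{eq:Ns-bound-two}) rather than the full $\chi_\mathrm{SAC}^{(t)}$ bound), and then sums over $s$, whereas you bound the partial sum directly via $\chi_\mathrm{SAC}^{(t)}(a)$; both routes are equivalent.
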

Notice that this bound does not include limitations as in
Corollary \ref{th:subgraph-sequence-local}.  Generally, for an
increasing sequence, 
\begin{equation}
  \label{eq:limit-define}
  \rho_0\equiv \lim_{t\to\infty}\rho(H_p^{(t)})\le \rltwo(H_p).  
\end{equation}
 It may well
happen that $\rho_0<\rltwo(H_p)$, as in the case where
$\mathcal{D}$ is a tree, cf.\ Example \ref{ex:convergence-to-tree}.  
\begin{proof}[Proof of Theorem \ref{th:SAC-bound-infinite}]
  The sequence $\rho(H_p^{(t)})$ is non-decreasing. By assumption, it
  is also bounded, thus the limit exists.  For any $s$, the cycles of
  length $s$ are contained in a finite vicinity of the original arc
  $a$; the corresponding contribution does not exceed $\rho_0^s$ (see
  the proof of Theorem \ref{th:chi-SAC-bound}).
  Summation over $s$ gives the bound (\ref{eq:SAC-limit-bounded}).
\end{proof}

Combined with Theorem \ref{th:sac-divergent}, this gives the
following Corollary (its weaker version previously appeared as a
conjecture in Ref.~\cite{Hamilton-Pryadko-SIAM-2015}).

\begin{corollary}
  \label{th:nice-uniqueness-bound}
  Consider heterogeneous site percolation on an infinite, locally
  finite digraph $\mathcal{D}$, characterized by the weighted
  Hashimoto matrix $H_p$, with on-site probabilities
  $p_\nu\le p_\mathrm{max}<1$.  For an increasing sequence of finite
  subgraphs
  $\mathcal{D}^{(t)}\subset \mathcal{D}^{(t+1)}\subset\mathcal{D}$
  weakly converging to $\mathcal{D}$, let
  $\rho_0\equiv \lim_{t\to\infty}\rho(H_p^{(t)})$.  Then, if
  strong-cluster percolation happens while $\rho_0<1$, the percolating
  cluster cannot be unique.
\end{corollary}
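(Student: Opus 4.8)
The plan is to chain together the two immediately preceding results, since the Corollary is designed to be a direct consequence of them. First I would note that the hypotheses match exactly the setup of Theorem~\ref{th:SAC-bound-infinite}: an infinite, locally finite graph $\mathcal{G}$ (viewed as a symmetric digraph so that $H_p$ and its arc set are well defined), an increasing sequence of finite subgraphs $\mathcal{G}^{(t)}\subset\mathcal{G}^{(t+1)}\subset\mathcal{G}$ exhausting $\mathcal{G}$ around a common origin, and the limiting spectral radius $\rho_0=\lim_{t\to\infty}\rho(H_p^{(t)})$ with $\rho_0<1$. Invoking Theorem~\ref{th:SAC-bound-infinite} then yields, for every arc $a\in\mathcal{A}(\mathcal{G})$, the uniform bound $\chi_\mathrm{SAC}(a)\le(1-\rho_0)^{-1}<\infty$. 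Thus the SAC susceptibility is finite throughout the graph.

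The second step is to feed this into Theorem~\ref{th:uniqueness-bound}, whose hypothesis is precisely that percolation occurs while the SAC susceptibility is finite. The Corollary assumes that percolation happens (and that $\rho_0<1$), so both ingredients are in place: percolation by assumption, and finiteness of $\chi_\mathrm{SAC}$ from the previous step. Theorem~\ref{th:uniqueness-bound} then delivers the conclusion directly, namely that the percolating cluster cannot be unique, which is exactly the statement of the Corollary.

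The only thing that genuinely needs checking is the consistency of the hypotheses across the two theorems being combined, rather than any new calculation. In particular, Theorem~\ref{th:uniqueness-bound} is phrased for undirected graphs, matching the undirected setting of the Corollary, whereas Theorem~\ref{th:SAC-bound-infinite} is stated for digraphs; the bridge is the standard identification of an undirected graph with its symmetric digraph, under which each edge becomes a pair of mutually inverted arcs and $\chi_\mathrm{SAC}(a)$ is defined with respect to these directed arcs, exactly as set up in the preceding subsection. I would therefore expect no real obstacle at the level of the Corollary itself: the substantive content lives entirely in the two theorems it combines---the Menger-theorem lower bound on the number of SACs through each arc of a stable two-connected core in Theorem~\ref{th:uniqueness-bound}, and the spectral bound on SAC susceptibility via the diagonal matrix elements $[H_p^s]_{aa}$ in Theorem~\ref{th:SAC-bound-infinite}. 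The Corollary is then a clean deduction, with the $\rho_0<1$ condition serving only to guarantee finiteness of the SAC susceptibility.
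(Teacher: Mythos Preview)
Your proposal is correct and matches the paper's own justification exactly: the paper simply states that the Corollary follows by combining Theorem~\ref{th:SAC-bound-infinite} (finite SAC susceptibility when $\rho_0<1$) with Theorem~\ref{th:uniqueness-bound} (finite SAC susceptibility plus percolation implies non-uniqueness). Your additional remark about bridging the digraph formulation of Theorem~\ref{th:SAC-bound-infinite} to the undirected setting via the symmetric-arc identification is a helpful clarification that the paper leaves implicit.
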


If $\rho_0<1$ and yet an infinite strongly-connected cluster exists on
the induced subgraph $\mathcal{D}'$, the system is expected to be
below any transition associated with the number of percolating
clusters.  While such a transition is usually called ``uniqueness''
transition, we note that one-endedness of the infinite cluster is
necessary but not sufficient to have a unique percolating cluster.  An
example could be any graph with a finite number of ends. 

\begin{example}
  \label{ex:convergence-to-tree}
  A degree-$d$ infinite tree $\mathcal{T}_d$ can be obtained as a
  limit around its root of the following graph sequences: (\textbf{a})
  a sequence of its subgraphs, $t$-generation trees
  $\mathcal{T}_d^{(t)}$; (\textbf{b}) sequence of graphs obtained from
  $\mathcal{T}_d^{(t)}$ by pairing degree-one vertices arbitrarily to
  form degree-two vertices; (\textbf{c}) a sequence of $d$-regular
  graphs obtained from $\mathcal{T}_d^{(t)}$ by joining the leaves in
  pairs and replacing any resulting pair of edges connected by a
  degree-two vertex with a single edge.  In the case (\textbf{a}),
  $\rho(H^{(t)})=0$ for any $t$; in the case (\textbf{b}),
  $\lim_{t\to\infty}\rho(H^{(t)})=(d-1)^{1/2}=(\gr T_d)^{1/2}$, with
  the components of the PF vectors falling exponentially away from the
  center.  In the case (\textbf{c}), $\rho(H^{(t)})=d-1=\gr T_d$, and
  the PF vectors have all equal components.  The sequence of subgraphs
  (\textbf{a}) correctly reproduces (the absence of) the uniqueness
  transition, while the sequences of degree-regular graphs
  (\textbf{c}) give the percolation transition at $p_c=1/(d-1)$.
\end{example}
\begin{example}
  Consider a $(d,d)$-regular locally-planar hyperbolic graph
  $\mathcal{G}_d$, with $d$ identical $d$-sided plaquettes meeting at
  every vertex.  This graph can be obtained as a limit of (\textbf{a})
  an increasing sequence of radius-$t$ subgraphs, with
  $\lim_{t\to\infty}\rho(H^{(t)})=\rltwo(H)$, or (\textbf{b}) a
  sequence of $d$-regular graphs\cite{Delfosse-Zemor-2014} whose
  spectral radii coincide with $\gr H=\bargr H=d-1$.  
  The spectral radius of the adjacency matrix satisfies
  the following bounds based on the Cheeger
  constant\cite{Higuchi-Shirai-2003,Madras-Wu-2005},
  \begin{equation}
    \label{eq:Hd-bounds}
    2\sqrt{d-1}\le \rltwo(A)\le 2\sqrt{d}; 
  \end{equation}
these result in 
\begin{equation}
  \label{eq:4}
  \sqrt{d-1}\le \rltwo(H)\le 1+\sqrt{d}. 
\end{equation}
For
site percolation on $\mathcal{G}_d$, we recover the maximum degree
bound for percolation transition, $p_c=p_T\ge 1/(d-1)$, and get the
bound $p_u\ge 1/\sqrt{d-1}$ for the uniqueness transition.
\end{example}
\begin{example}
  For a $D$-dimensional hypercubic lattice, $\mathbb{Z}^D$, we have
  $\rltwo(H)=\gr H=\bargr H=2D-1=d-1$.  There is only one
  percolation transition, $p_c=p_u=p_T$.  Our  bounds for these
  transitions coincide and recover
  the maximum degree bound.
\end{example}

\begin{example}
  \label{ex:two-region-strong} Consider strongly-connected cluster
  percolation on the oriented graphs in Example~\ref{ex:two-region}.
  For any finite $L$, a strongly-connected cluster can be formed.
  Numerically, we get $p_c^\mathrm{(str)}\approx 0.53$, see
  Fig.~\ref{fig:numerics-str}.  This is a reasonable value since
  directed-cluster percolation in both regions is necessary to form a large
  strong cluster.  However, the weak limit of this graph sequence is
  an infinite oriented tree.  For such a tree, any strongly-connected
  cluster is limited to one site: strong-cluster percolation never
  happens.  Respectively, the limiting spectral radius in Corollary
  \ref{th:nice-uniqueness-bound} is $\rho_0=0$.
\end{example}
\begin{figure}[htbp] \centering 
  \includegraphics[width=0.9\columnwidth]{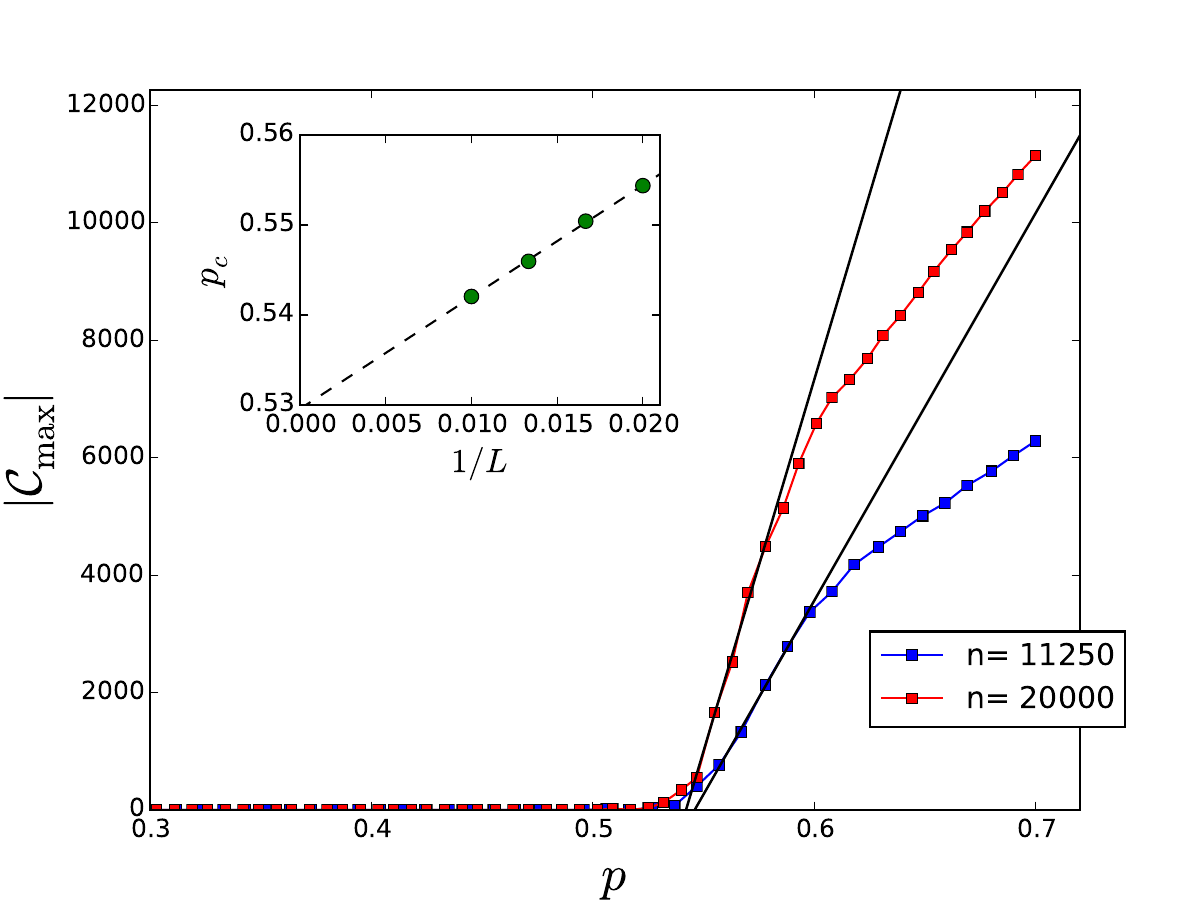}
  \caption{As in Fig.~\ref{fig:numerics} but for the largest
    strongly-connected cluster size as a function of on-site
    probability $p$, giving the strongly-connected cluster percolation
    transition at $p_c^{\rm (str)}=0.530\pm 0.01$.  Each point is an
    average over 120 realizations of the open vertex configurations
    with given site probability; the corresponding standard errors are
    smaller than the point size.  For this graph,
    all other strongly-connected clusters are all of size 1
    (single vertex).}
  \label{fig:numerics-str}
\end{figure}

\begin{example}
  \label{ex:core-with-leaves}
  For an integer $d_0>2$, consider site percolation on a graph
  $\mathcal{G}$ constructed from a well-connected core, a large random
  $d_0$-regular graph $\mathcal{G}_0$ with $n_0$ vertices, by
  connecting $r$ additional otherwise disjoint edges to each vertex in
  $\mathcal{V}(\mathcal{G}_0)$.  With $n_0$ large and $r$ bounded,
  formation of a giant component on ${\cal G}$ is governed by the
  corresponding transition on $\mathcal{G}_0$.  Spectral radii of the
  Hashimoto and the adjacency matrices of ${\cal G}$ are
  $\rho(H)=d_0-1$ and $\rho(A)=\Bigl[d_0+\left(4r+d_0^2\right)^{1/2}\Bigr]/2$.  We see that
  the bound in Corollary \ref{th:spectral-radius-generic-chi-bound}
  becomes increasingly loose as $r$ is increased in the region
  $r>d_0^2/4$.






\end{example}

\section{Discussion}
\label{sec:disc}
We constructed several bounds for heterogeneous percolation on general
graphs, directed or undirected.  We obtained explicit expressions for
finite and infinite (di)graphs, and, in two cases, analyzed the
continuity of the bounds in the infinite-graph limit.  Most bounds are
obtained from the non-back\-tracking path expansion, and formulated in
terms of appropriately weight\-ed Hashimoto matrices $H_p$.  While in
several cases stronger bounds (e.g., in terms of self-avoiding paths,
or using modified Hashimoto matrix as in
Ref.~\cite{Radicchi-Castellano-2016}) may be readily available, one
main advantage of the results presented here is that spectral radii
and norms can be calculated efficiently.

For a general infinite undirected graph, there are three transitions
usually associated with percolation: divergence of the cluster
susceptibility, formation of an infinite cluster, and the uniqueness
transition.  Bounds for all three transitions are formulated in terms
of the weighted Hashimoto matrix, see Theorems
\ref{th:out-cluster-infinite}, \ref{th:general-percolation-bound}, and
Corollary \ref{th:nice-uniqueness-bound}.  These bounds also apply for
directed- or strong-cluster percolation on digraphs.  In addition, the
condition $\rltwo(H_p)<1$ guarantees that connectivity on an infinite
digraph decays exponentially with the distance.

In practical network theory applications, more important is the
transition associated with the formation of a giant component.
Several simple criteria for the emergence of a giant component that
are commonly used in network theory rely on the degree distribution of
a graph. First is the lower bound for percolation on an arbitrary
graph in terms of the maximum degree \cite{Hammersley-1961}, see
Eq.~(\ref{eq:max-degree-bound}).  While it is universally applicable,
the issue with this inequality is that it tends to give very low
bounds on graphs with wide degree distribution.  Our Theorem
\ref{th:H-norm-one-bound} gives a generalization of this bound to
heterogeneous site percolation on arbitrary digraphs.  Second is the
Molloy-Reed criterion\cite{Gordon-1962,Molloy-Reed-1995,%
  Cohen-Erez-benAvraham-Havlin-2000,%
  Callaway-Newman-Strogatz-Watts-2000} which gives the percolation
threshold on random graphs in terms of the two first moments of degree
distribution.  It has been recently generalized to giant
in-/out-clusters on random digraphs\cite{Kryven-2016}.  While these
formulas are asymptotically exact in random graph
ensembles\cite{Chung-Horn-Lu-2009}, there is no guarantee: on actual
networks the Molloy-Reed criterion can substantially overestimate or
underestimate the threshold\cite{Radicchi-2015}.

The spectral radius has also been used to study percolation. On large
dense graphs, under mild conditions, the critical probability where a
giant cluster emerges is very close to the inverse spectral radius of
the adjacency matrix\cite{Bollobas-Borgs-Chayes-Riordan-2010}.  Our
Corollary \ref{th:undir-connectivity-bound} gives a related strict
bound for emergence of a giant strongly-connected component in
heterogeneous site percolation on an arbitrary digraph (a slightly
stronger bound specific for undirected graphs is given in Corollary
\ref{th:spectral-radius-generic-chi-bound}).  One substantial
advantage of these bounds is their universal applicability.  On the
other hand, bounds in terms of $\rho(A_p)$ can become loose in certain
carefully designed networks, see Example \ref{ex:core-with-leaves}.

In comparison, a bound in terms of the spectral radius of the
(weighted) Hashimoto matrix does not change upon the addition of 
leaves or finite trees.  It is also asymptotically exact for tree-like
graphs with few short cycles, large random graphs being the most
important example.  While such a bound does indeed limit the
\emph{percolation} transition on highly-uniform (e.g.,
quasi-transitive) (di)graphs, most generally the condition
$\rho(H_p)< 1$ is a bound on the strong-cluster \emph{uniqueness}
transition.  On an infinite digraph, such a bound is constructed as
the limit of the spectral radii for an increasing sequence of
subgraphs, which recovers the absence of the uniqueness transition on
an arbitrary infinite tree.  Further, for $\rltwo(H_p)<1$, on an
infinite digraph, connectivity decays exponentially with the distance.
On a finite digraph, with $\rho(H_p)<1$, strong connectivity also
decays exponentially with an additional local strong connectivity
condition needed to limit the prefactor in the bound.  Such an
exponential decay also implies sublinear scaling of the expected size
of the largest cluster, which in turn guarantees that a giant
strongly-connected component containing a non-zero fraction of all
vertices emerges with an asymptotically zero probability.

\section*{Acknowledgments}
We are grateful to M.\ E.\ J. Newman and N. Delfosse for enlightening
discussions, and to the anonymous Referee who insisted that we extend
the original analysis to infinite graphs.  This work was supported in
part by the U.S.\ Army Research Office under Grant No.\
W911NF-14-1-0272 and by the NSF under Grant No.\ PHY-1416578.  LPP
also acknowledges hospitality by the Institute for Quantum Information
and Matter, an NSF Physics Frontiers Center with support of the Gordon
and Betty Moore Foundation.

\bibliography{Hamilton_Pryadko}

\begin{thebibliography}{10}
\expandafter\ifx\csname url\endcsname\relax
  \def\url#1{\texttt{#1}}\fi
\expandafter\ifx\csname urlprefix\endcsname\relax\def\urlprefix{URL }\fi
\expandafter\ifx\csname href\endcsname\relax
  \def\href#1#2{#2} \def\path#1{#1}\fi

\bibitem{Albert-Jeong-Barabasi-2000}
R.~Albert, H.~Jeong, A.-L. Barab\'asi, Error and attack tolerance of complex
  networks, Nature 406 (2000) 378--382.
\newblock \href {http://dx.doi.org/10.1038/35019019}
  {\path{doi:10.1038/35019019}}.

\bibitem{Albert-Barabasi-RMP-2002}
R.~Albert, A.-L. Barab\'asi,
  \href{http://link.aps.org/doi/10.1103/RevModPhys.74.47}{Statistical mechanics
  of complex networks}, Rev. Mod. Phys. 74 (2002) 47--97.
\newblock \href {http://dx.doi.org/10.1103/RevModPhys.74.47}
  {\path{doi:10.1103/RevModPhys.74.47}}.
\newline\urlprefix\url{http://link.aps.org/doi/10.1103/RevModPhys.74.47}

\bibitem{Borner-Sanyal-Vspignani-ARIST-2007}
K.~B\"{o}rner, S.~Sanyal, A.~Vespignani,
  \href{http://dx.doi.org/10.1002/aris.2007.1440410119}{Network science},
  Annual Review of Information Science and Technology 41~(1) (2007) 537--607.
\newblock \href {http://dx.doi.org/10.1002/aris.2007.1440410119}
  {\path{doi:10.1002/aris.2007.1440410119}}.
\newline\urlprefix\url{http://dx.doi.org/10.1002/aris.2007.1440410119}

\bibitem{Danon-etal-2011}
L.~Danon, A.~P. Ford, T.~House, C.~P. Jewell, M.~J. Keeling, G.~O. Roberts,
  J.~V. Ross, M.~C. Vernon,
  \href{http://dx.doi.org/10.1155/2011/284909}{Networks and the epidemiology of
  infectious disease}, Interdisciplinary Perspectives on Infectious Diseases
  2011 (2011) 284909.
\newblock \href {http://dx.doi.org/10.1155/2011/284909}
  {\path{doi:10.1155/2011/284909}}.
\newline\urlprefix\url{http://dx.doi.org/10.1155/2011/284909}

\bibitem{Costa-Oliveira-CorreaRocha-2011}
L.~da~Fontoura~Costa, O.~N. Oliveira, G.~Travieso, F.~A. Rodrigues, P.~R.
  Villas~Boas, L.~Antiqueira, M.~P. Viana, L.~E. Correa~Rocha,
  \href{http://dx.doi.org/10.1080/00018732.2011.572452}{Analyzing and modeling
  real-world phenomena with complex networks: a survey of applications},
  Advances in Physics 60~(3) (2011) 329--412.
\newblock \href {http://dx.doi.org/10.1080/00018732.2011.572452}
  {\path{doi:10.1080/00018732.2011.572452}}.
\newline\urlprefix\url{http://dx.doi.org/10.1080/00018732.2011.572452}

\bibitem{PastorSatorras-etal-RMP-2015}
R.~Pastor-Satorras, C.~Castellano, P.~Van~Mieghem, A.~Vespignani,
  \href{http://link.aps.org/doi/10.1103/RevModPhys.87.925}{Epidemic processes
  in complex networks}, Rev. Mod. Phys. 87 (2015) 925--979.
\newblock \href {http://dx.doi.org/10.1103/RevModPhys.87.925}
  {\path{doi:10.1103/RevModPhys.87.925}}.
\newline\urlprefix\url{http://link.aps.org/doi/10.1103/RevModPhys.87.925}

\bibitem{Radicchi-2015}
F.~Radicchi,
  \href{http://link.aps.org/doi/10.1103/PhysRevE.91.010801}{Predicting
  percolation thresholds in networks}, Phys. Rev. E 91 (2015) 010801.
\newblock \href {http://dx.doi.org/10.1103/PhysRevE.91.010801}
  {\path{doi:10.1103/PhysRevE.91.010801}}.
\newline\urlprefix\url{http://link.aps.org/doi/10.1103/PhysRevE.91.010801}

\bibitem{Radicchi-interdep-2015}
F.~Radicchi, Percolation in real interdependent networks, Nature Physics 11
  (2015) 597--602.
\newblock \href {http://dx.doi.org/10.1038/nphys3374}
  {\path{doi:10.1038/nphys3374}}.

\bibitem{Cohen-Erez-benAvraham-Havlin-2000}
R.~Cohen, K.~Erez, D.~{ben-Avraham}, S.~Havlin,
  \href{http://link.aps.org/doi/10.1103/PhysRevLett.85.4626}{Resilience of the
  internet to random breakdowns}, Phys. Rev. Lett. 85 (2000) 4626--4628.
\newblock \href {http://dx.doi.org/10.1103/PhysRevLett.85.4626}
  {\path{doi:10.1103/PhysRevLett.85.4626}}.
\newline\urlprefix\url{http://link.aps.org/doi/10.1103/PhysRevLett.85.4626}

\bibitem{Callaway-Newman-Strogatz-Watts-2000}
D.~S. Callaway, M.~E.~J. Newman, S.~H. Strogatz, D.~J. Watts,
  \href{http://link.aps.org/doi/10.1103/PhysRevLett.85.5468}{Network robustness
  and fragility: Percolation on random graphs}, Phys. Rev. Lett. 85 (2000)
  5468--5471.
\newblock \href {http://dx.doi.org/10.1103/PhysRevLett.85.5468}
  {\path{doi:10.1103/PhysRevLett.85.5468}}.
\newline\urlprefix\url{http://link.aps.org/doi/10.1103/PhysRevLett.85.5468}

\bibitem{Grassberger-1983}
P.~Grassberger, On the critical behavior of the general epidemic process and
  dynamical percolation, Mathematical Biosciences 63~(2) (1983) 157 -- 172.
\newblock \href {http://dx.doi.org/10.1016/0025-5564(82)90036-0}
  {\path{doi:10.1016/0025-5564(82)90036-0}}.

\bibitem{Moore-Newman-2000}
C.~Moore, M.~E.~J. Newman,
  \href{http://link.aps.org/doi/10.1103/PhysRevE.61.5678}{Epidemics and
  percolation in small-world networks}, Phys. Rev. E 61 (2000) 5678--5682.
\newblock \href {http://dx.doi.org/10.1103/PhysRevE.61.5678}
  {\path{doi:10.1103/PhysRevE.61.5678}}.
\newline\urlprefix\url{http://link.aps.org/doi/10.1103/PhysRevE.61.5678}

\bibitem{Sander-epidemics-2002}
L.~M. Sander, C.~P. Warren, I.~M. Sokolov, C.~Simon, J.~Koopman, Percolation on
  heterogeneous networks as a model for epidemics, Mathematical Biosciences
  180~(1-2) (2002) 293 -- 305.
\newblock \href {http://dx.doi.org/10.1016/S0025-5564(02)00117-7}
  {\path{doi:10.1016/S0025-5564(02)00117-7}}.

\bibitem{PastorSatorras-Vespignani-2001}
R.~Pastor-Satorras, A.~Vespignani,
  \href{http://link.aps.org/doi/10.1103/PhysRevLett.86.3200}{Epidemic spreading
  in scale-free networks}, Phys. Rev. Lett. 86 (2001) 3200--3203.
\newblock \href {http://dx.doi.org/10.1103/PhysRevLett.86.3200}
  {\path{doi:10.1103/PhysRevLett.86.3200}}.
\newline\urlprefix\url{http://link.aps.org/doi/10.1103/PhysRevLett.86.3200}

\bibitem{Gai-Kapadia-2010}
P.~Gai, S.~Kapadia,
  \href{http://rspa.royalsocietypublishing.org/content/466/2120/2401.abstract}{Contagion
  in financial networks}, Proc. Royal Soc. A: Math., Phys. and Eng. Sci.
  466~(2120) (2010) 2401--2423.
\newblock \href {http://dx.doi.org/10.1098/rspa.2009.0410}
  {\path{doi:10.1098/rspa.2009.0410}}.
\newline\urlprefix\url{http://rspa.royalsocietypublishing.org/content/466/2120/2401.abstract}

\bibitem{Watts-PNAS-2002}
D.~J. Watts, \href{http://www.pnas.org/content/99/9/5766.abstract}{A simple
  model of global cascades on random networks}, Proceedings of the National
  Academy of Sciences 99~(9) (2002) 5766--5771.
\newblock \href
  {http://arxiv.org/abs/http://www.pnas.org/content/99/9/5766.full.pdf+html}
  {\path{arXiv:http://www.pnas.org/content/99/9/5766.full.pdf+html}}, \href
  {http://dx.doi.org/10.1073/pnas.082090499}
  {\path{doi:10.1073/pnas.082090499}}.
\newline\urlprefix\url{http://www.pnas.org/content/99/9/5766.abstract}

\bibitem{Kempe-Kleinberg-Tardos-2003}
D.~Kempe, J.~Kleinberg, E.~Tardos,
  \href{http://doi.acm.org/10.1145/956750.956769}{Maximizing the spread of
  influence through a social network}, in: Proceedings of the Ninth ACM SIGKDD
  International Conference on Knowledge Discovery and Data Mining, KDD '03,
  ACM, New York, NY, USA, 2003, pp. 137--146.
\newblock \href {http://dx.doi.org/10.1145/956750.956769}
  {\path{doi:10.1145/956750.956769}}.
\newline\urlprefix\url{http://doi.acm.org/10.1145/956750.956769}

\bibitem{Jiang-Miao-Yi-Zhenzhong-Hauptmann-2014}
L.~Jiang, Y.~Miao, Y.~Yang, Z.~Lan, A.~G. Hauptmann,
  \href{http://doi.acm.org/10.1145/2578726.2578754}{Viral video style: A closer
  look at viral videos on youtube}, in: Proceedings of International Conference
  on Multimedia Retrieval, ICMR '14, ACM, New York, NY, USA, 2014, pp.
  193:193--193:200.
\newblock \href {http://dx.doi.org/10.1145/2578726.2578754}
  {\path{doi:10.1145/2578726.2578754}}.
\newline\urlprefix\url{http://doi.acm.org/10.1145/2578726.2578754}

\bibitem{Delfosse-Zemor-2012}
N.~Delfosse, G.~Z{\'e}mor,
  \href{http://dl.acm.org/citation.cfm?id=2535680.2535684}{Upper bounds on the
  rate of low density stabilizer codes for the quantum erasure channel},
  Quantum Info. Comput. 13~(9-10) (2013) 793--826.
\newline\urlprefix\url{http://dl.acm.org/citation.cfm?id=2535680.2535684}

\bibitem{Kovalev-Pryadko-FT-2013}
A.~A. Kovalev, L.~P. Pryadko,
  \href{http://link.aps.org/doi/10.1103/PhysRevA.87.020304}{Fault tolerance of
  quantum low-density parity check codes with sublinear distance scaling},
  Phys. Rev. A 87 (2013) 020304(R).
\newblock \href {http://arxiv.org/abs/arXiv:1208.2317}
  {\path{arXiv:arXiv:1208.2317}}, \href
  {http://dx.doi.org/10.1103/PhysRevA.87.020304}
  {\path{doi:10.1103/PhysRevA.87.020304}}.
\newline\urlprefix\url{http://link.aps.org/doi/10.1103/PhysRevA.87.020304}

\bibitem{PhysRevA.69.062311}
M.~Hein, J.~Eisert, H.~J. Briegel,
  \href{http://link.aps.org/doi/10.1103/PhysRevA.69.062311}{Multiparty
  entanglement in graph states}, Phys. Rev. A 69 (2004) 062311.
\newblock \href {http://dx.doi.org/10.1103/PhysRevA.69.062311}
  {\path{doi:10.1103/PhysRevA.69.062311}}.
\newline\urlprefix\url{http://link.aps.org/doi/10.1103/PhysRevA.69.062311}

\bibitem{PhysRevLett.99.130501}
K.~Kieling, T.~Rudolph, J.~Eisert,
  \href{http://link.aps.org/doi/10.1103/PhysRevLett.99.130501}{Percolation,
  renormalization, and quantum computing with nondeterministic gates}, Phys.
  Rev. Lett. 99 (2007) 130501.
\newblock \href {http://dx.doi.org/10.1103/PhysRevLett.99.130501}
  {\path{doi:10.1103/PhysRevLett.99.130501}}.
\newline\urlprefix\url{http://link.aps.org/doi/10.1103/PhysRevLett.99.130501}

\bibitem{PhysRevLett.103.240503}
M.~Cuquet, J.~Calsamiglia,
  \href{http://link.aps.org/doi/10.1103/PhysRevLett.103.240503}{Entanglement
  percolation in quantum complex networks}, Phys. Rev. Lett. 103 (2009) 240503.
\newblock \href {http://dx.doi.org/10.1103/PhysRevLett.103.240503}
  {\path{doi:10.1103/PhysRevLett.103.240503}}.
\newline\urlprefix\url{http://link.aps.org/doi/10.1103/PhysRevLett.103.240503}

\bibitem{Kasteleyn-Fortuin-1969}
P.~W. Kasteleyn, C.~M. Fortuin, Phase transitions in lattice systems with
  random local properties, Journal of the Physical Society of Japan S 26~(S)
  (1969) 11--14, (Suppl.).

\bibitem{Fortuin-Kasteleyn-1972}
C.~M. Fortuin, P.~W. Kasteleyn,
  \href{http://www.sciencedirect.com/science/article/pii/0031891472900456}{On
  the random-cluster model: {I.} introduction and relation to other models},
  Physica 57~(4) (1972) 536 -- 564.
\newblock \href {http://dx.doi.org/10.1016/0031-8914(72)90045-6}
  {\path{doi:10.1016/0031-8914(72)90045-6}}.
\newline\urlprefix\url{http://www.sciencedirect.com/science/article/pii/0031891472900456}

\bibitem{Essam-1980}
J.~W. Essam, Percolation theory, Rep. on Progr. in Phys. 43 (1980) 833--912.
\newblock \href {http://dx.doi.org/10.1088/0034-4885/43/7/001}
  {\path{doi:10.1088/0034-4885/43/7/001}}.

\bibitem{Wu-RMP-1982}
F.~Y. Wu, \href{http://link.aps.org/doi/10.1103/RevModPhys.54.235}{The {P}otts
  model}, Rev. Mod. Phys. 54 (1982) 235--268.
\newblock \href {http://dx.doi.org/10.1103/RevModPhys.54.235}
  {\path{doi:10.1103/RevModPhys.54.235}}.
\newline\urlprefix\url{http://link.aps.org/doi/10.1103/RevModPhys.54.235}

\bibitem{Ambegaokar-Halperin-Langer-1971}
V.~Ambegaokar, B.~I. Halperin, J.~S. Langer,
  \href{http://link.aps.org/abstract/PRB/v4/p2612}{Hopping conductivity in
  disordered systems}, Phys. Rev. B 4 (1971) 2612--20.
\newline\urlprefix\url{http://link.aps.org/abstract/PRB/v4/p2612}

\bibitem{Kirkpatrick-1973}
S.~Kirkpatrick,
  \href{http://link.aps.org/doi/10.1103/RevModPhys.45.574}{Percolation and
  conduction}, Rev. Mod. Phys. 45 (1973) 574--588.
\newblock \href {http://dx.doi.org/10.1103/RevModPhys.45.574}
  {\path{doi:10.1103/RevModPhys.45.574}}.
\newline\urlprefix\url{http://link.aps.org/doi/10.1103/RevModPhys.45.574}

\bibitem{isichenko-rmp-1992}
M.~B. Isichenko, Percolation, statistical topography, and transport in random
  media, Rev. Mod. Phys. 64 (1992) 961.

\bibitem{Bergman-Imry-1977}
D.~J. Bergman, Y.~Imry,
  \href{http://link.aps.org/doi/10.1103/PhysRevLett.39.1222}{Critical behavior
  of the complex dielectric constant near the percolation threshold of a
  heterogeneous material}, Phys. Rev. Lett. 39 (1977) 1222--1225.
\newblock \href {http://dx.doi.org/10.1103/PhysRevLett.39.1222}
  {\path{doi:10.1103/PhysRevLett.39.1222}}.
\newline\urlprefix\url{http://link.aps.org/doi/10.1103/PhysRevLett.39.1222}

\bibitem{Nan-Shen-Ma-2010}
C.~W. Nan, Y.~Shen, J.~Ma, Physical properties of composites near percolation,
  in: D.~R. Clarke, M.~Ruhle, F.~Zok (Eds.), ANNUAL REVIEW OF MATERIALS
  RESEARCH, Vol.~40, ANNUAL REVIEWS, 2010, pp. {131--151}.
\newblock \href {http://dx.doi.org/{10.1146/annurev-matsci-070909-104529}}
  {\path{doi:{10.1146/annurev-matsci-070909-104529}}}.

\bibitem{Stinchcombe-1983}
R.~B. Stinchcombe, Dilute magnetism, in: C.~Domb, J.~L. Lebowitz (Eds.), Phase
  transitions and critical phenomena, Vol.~7, Academic, London, 1983, pp.
  151--183.

\bibitem{Almeida-1978}
J.~De~Almeida, D.~J. Thouless, Stability of the sherrington-kirkpatrick
  solution of a spin glass model, Journal of Physics A: Mathematical and
  General 11~(5) (1978) 983.

\bibitem{Novak-1985}
M.~Novak, O.~Symko, D.~Zheng, S.~Oseroff, Spin glass behavior of cd1- xmnxte
  below the nearest-neighbor percolation limit, Journal of Applied Physics
  57~(8) (1985) 3418--3420.

\bibitem{Arcangelis-1991}
L.~De~Arcangelis, A.~Coniglio, F.~Peruggi, Percolation transition in spin
  glasses, EPL (Europhysics Letters) 14~(6) (1991) 515.

\bibitem{Coniglio-1994}
A.~Coniglio, Frustrated percolation, spin glasses and glasses, Il Nuovo Cimento
  D 16~(8) (1994) 1027--1037.

\bibitem{PhysRevLett.93.040401}
A.~Sanpera, A.~Kantian, L.~Sanchez-Palencia, J.~Zakrzewski, M.~Lewenstein,
  \href{http://link.aps.org/doi/10.1103/PhysRevLett.93.040401}{Atomic
  fermi-bose mixtures in inhomogeneous and random lattices: From fermi glass to
  quantum spin glass and quantum percolation}, Phys. Rev. Lett. 93 (2004)
  040401.
\newblock \href {http://dx.doi.org/10.1103/PhysRevLett.93.040401}
  {\path{doi:10.1103/PhysRevLett.93.040401}}.
\newline\urlprefix\url{http://link.aps.org/doi/10.1103/PhysRevLett.93.040401}

\bibitem{Hamilton-Pryadko-PRL-2014}
K.~E. Hamilton, L.~P. Pryadko,
  \href{http://link.aps.org/doi/10.1103/PhysRevLett.113.208701}{Tight lower
  bound for percolation threshold on an infinite graph}, Phys. Rev. Lett. 113
  (2014) 208701.
\newblock \href {http://dx.doi.org/10.1103/PhysRevLett.113.208701}
  {\path{doi:10.1103/PhysRevLett.113.208701}}.
\newline\urlprefix\url{http://link.aps.org/doi/10.1103/PhysRevLett.113.208701}

\bibitem{Hashimoto-matrix-1989}
K.~Hashimoto, \href{http://mathsoc.jp/publication/ASPM/}{Zeta functions of
  finite graphs and representations of $p$-adic groups}, in: K.~Hashimoto,
  Y.~Namikawa (Eds.), Automorphic Forms and Geometry of Arithmetic Varieties,
  Vol.~15 of Advanced Studies in Pure Mathematics, Kinokuniya, Tokyo, 1989, pp.
  211--280.
\newline\urlprefix\url{http://mathsoc.jp/publication/ASPM/}

\bibitem{Karrer-Newman-Zdeborova-PRL-2014}
B.~Karrer, M.~E.~J. Newman, L.~Zdeborov\'a,
  \href{http://link.aps.org/doi/10.1103/PhysRevLett.113.208702}{Percolation on
  sparse networks}, Phys. Rev. Lett. 113 (2014) 208702.
\newblock \href {http://dx.doi.org/10.1103/PhysRevLett.113.208702}
  {\path{doi:10.1103/PhysRevLett.113.208702}}.
\newline\urlprefix\url{http://link.aps.org/doi/10.1103/PhysRevLett.113.208702}

\bibitem{Menshikov-1986}
M.~V. Menshikov, Coincidence of critical points in percolation problems.,
  Soviet Mathematics, Doklady 33 (1986) 856--859.

\bibitem{Menshikov-Sidorenko-eng-1987}
M.~V. Men’shikov, A.~F. Sidorenko, The coincidence of critical points in
  {P}oisson percolation models, Theory of Probability \& Its Applications
  32~(3) (1988) 547--550.
\newblock \href {http://dx.doi.org/10.1137/1132083}
  {\path{doi:10.1137/1132083}}.

\bibitem{Benjamini-Schramm-1996}
I.~Benjamini, O.~Schramm,
  \href{http://ecp.ejpecp.org/article/view/978}{Percolation beyond {$Z^d$},
  many questions and a few answers}, Electronic Communications in Probability 1
  (1996) 71--82.
\newblock \href {http://dx.doi.org/10.1214/ECP.v1-978}
  {\path{doi:10.1214/ECP.v1-978}}.
\newline\urlprefix\url{http://ecp.ejpecp.org/article/view/978}

\bibitem{Haggstrom-Jonasson-2006}
O.~H{\"a}ggstr{\"o}m, J.~Jonasson, Uniqueness and non-uniqueness in percolation
  theory, Probability Surveys 3 (2006) 289--344.
\newblock \href {http://dx.doi.org/10.1214/154957806000000096}
  {\path{doi:10.1214/154957806000000096}}.

\bibitem{Hofstad-2010}
R.~van~der Hofstad, Percolation and random graphs, in: I.~Molchanov, W.~Kendall
  (Eds.), New Perspectives on Stochastic Geometry, Oxford University Press,
  2010, Ch.~6, pp. 173--247, iSBN 978-0-19-923257-4.

\bibitem{Lyons-review-2000}
R.~Lyons, \href{http://dx.doi.org/10.1063/1.533179}{Phase transitions on
  nonamenable graphs}, Journal of Mathematical Physics 41~(3) (2000)
  1099--1126.
\newblock \href {http://arxiv.org/abs/math/9908177}
  {\path{arXiv:math/9908177}}, \href {http://dx.doi.org/10.1063/1.533179}
  {\path{doi:10.1063/1.533179}}.
\newline\urlprefix\url{http://dx.doi.org/10.1063/1.533179}

\bibitem{Restrepo-Ott-Hunt-2008}
J.~G. Restrepo, E.~Ott, B.~R. Hunt,
  \href{http://link.aps.org/doi/10.1103/PhysRevLett.100.058701}{Weighted
  percolation on directed networks}, Phys. Rev. Lett. 100 (2008) 058701.
\newblock \href {http://dx.doi.org/10.1103/PhysRevLett.100.058701}
  {\path{doi:10.1103/PhysRevLett.100.058701}}.
\newline\urlprefix\url{http://link.aps.org/doi/10.1103/PhysRevLett.100.058701}

\bibitem{Bollobas-giant-2001}
B.~Bollob\'as, Random Graphs, 2nd Edition, Vol.~73 of Cambridge studies in
  advanced mathematics, Cambridge University Press, 2001, Ch. 6. {T}he
  Evolution of Random Graphs---The Giant Component, pp. 130--159.

\bibitem{Alon-Benjamini-Stacey-2004}
N.~Alon, I.~Benjamini, A.~Stacey,
  \href{http://dx.doi.org/10.1214/009117904000000414}{Percolation on finite
  graphs and isoperimetric inequalities}, Ann. Probab. 32~(3) (2004)
  1727--1745.
\newblock \href {http://dx.doi.org/10.1214/009117904000000414}
  {\path{doi:10.1214/009117904000000414}}.
\newline\urlprefix\url{http://dx.doi.org/10.1214/009117904000000414}

\bibitem{Newman-Strogatz-Watts-2001}
M.~E.~J. Newman, S.~H. Strogatz, D.~J. Watts,
  \href{http://link.aps.org/doi/10.1103/PhysRevE.64.026118}{Random graphs with
  arbitrary degree distributions and their applications}, Phys. Rev. E 64
  (2001) 026118.
\newblock \href {http://dx.doi.org/10.1103/PhysRevE.64.026118}
  {\path{doi:10.1103/PhysRevE.64.026118}}.
\newline\urlprefix\url{http://link.aps.org/doi/10.1103/PhysRevE.64.026118}

\bibitem{Chung-Horn-Lu-2009}
F.~Chung, P.~Horn, L.~Lu,
  \href{http://projecteuclid.org/euclid.im/1318269501}{Percolation in general
  graphs}, Internet Mathematics 6~(3) (2009) 331--347.
\newline\urlprefix\url{http://projecteuclid.org/euclid.im/1318269501}

\bibitem{Bollobas-Borgs-Chayes-Riordan-2010}
B.~Bollob{\'{a}}s, C.~Borgs, J.~Chayes, O.~Riordan,
  \href{http://dx.doi.org/10.1214/09-AOP478}{Percolation on dense graph
  sequences}, The Annals of Probability 38~(1) (2010) 150--183.
\newblock \href {http://dx.doi.org/10.1214/09-AOP478}
  {\path{doi:10.1214/09-AOP478}}.
\newline\urlprefix\url{http://dx.doi.org/10.1214/09-AOP478}

\bibitem{Benjamini-Boucheron-Lugosi-Rossignol-2012}
I.~Benjamini, S.~Boucheron, G.~Lugosi, R.~Rossignol,
  \href{http://dx.doi.org/10.1214/10-AOP610}{Sharp threshold for percolation on
  expanders}, Ann. Probab. 40~(1) (2012) 130--145.
\newblock \href {http://dx.doi.org/10.1214/10-AOP610}
  {\path{doi:10.1214/10-AOP610}}.
\newline\urlprefix\url{http://dx.doi.org/10.1214/10-AOP610}

\bibitem{Harary-Norman-1960}
F.~Harary, R.~Z. Norman, \href{http://dx.doi.org/10.1007/BF02854581}{Some
  properties of line digraphs}, Rendiconti del Circolo Matematico di Palermo
  9~(2) (1960) 161--168.
\newblock \href {http://dx.doi.org/10.1007/BF02854581}
  {\path{doi:10.1007/BF02854581}}.
\newline\urlprefix\url{http://dx.doi.org/10.1007/BF02854581}

\bibitem{Kotani-Sunada-2000}
M.~Kotani, T.~Sunada,
  \href{http://journal.ms.u-tokyo.ac.jp/abstract/jms070102.html}{Zeta functions
  of finite graphs}, J. Math. Sci. Univ. Tokyo 7~(1) (2000) 7--25.
\newline\urlprefix\url{http://journal.ms.u-tokyo.ac.jp/abstract/jms070102.html}

\bibitem{Perron-1907}
O.~Perron, Zur theorie der matrizen, Math. Ann. 64 (1907) 248--263.
\newblock \href {http://dx.doi.org/10.1007/BF01449896}
  {\path{doi:10.1007/BF01449896}}.

\bibitem{Frobenius-1912}
G.~Frobenius, \href{http://www.biodiversitylibrary.org/item/126475}{{\"U}ber
  {M}atrizen aus nicht negativen elementen}, Sitzungsber. K{\"o}nigl. Preuss.
  Akad. Wiss. (1912) 456--477.
\newline\urlprefix\url{http://www.biodiversitylibrary.org/item/126475}

\bibitem{Meyer-book-2000}
C.~Meyer, \href{http://www.matrixanalysis.com/DownloadChapters.html}{Matrix
  analysis and applied linear algebra}, SIAM, 2000, Ch.~8.
\newline\urlprefix\url{http://www.matrixanalysis.com/DownloadChapters.html}

\bibitem{Karrer-Newman-Zdeborova-arXiv-2014}
B.~Karrer, M.~E.~J. Newman, L.~Zdeborova, Percolation on sparse networks,
  [original preprint version] (2014).
\newblock \href {http://arxiv.org/abs/1405.0483v1} {\path{arXiv:1405.0483v1}}.

\bibitem{Bandtlow-2004}
O.~F. Bandtlow, \href{http://dx.doi.org/10.1002/mana.200310149}{Estimates for
  norms of resolvents and an application to the perturbation of spectra},
  Mathematische Nachrichten 267~(1) (2004) 3--11.
\newblock \href {http://dx.doi.org/10.1002/mana.200310149}
  {\path{doi:10.1002/mana.200310149}}.
\newline\urlprefix\url{http://dx.doi.org/10.1002/mana.200310149}

\bibitem{Hammersley-1961}
J.~M. Hammersley, \href{http://dx.doi.org/10.1063/1.1703764}{Comparison of atom
  and bond percolation processes}, J. Math. Phys. 2 (1961) 728.
\newline\urlprefix\url{http://dx.doi.org/10.1063/1.1703764}

\bibitem{Bordenave-Lelarge-Massoulie-2015}
C.~Bordenave, M.~Lelarge, L.~Massouli{\'e}, Non-backtracking spectrum of random
  graphs: community detection and non-regular {R}amanujan graphs, unpublished
  (2015).
\newblock \href {http://arxiv.org/abs/1501.06087} {\path{arXiv:1501.06087}}.

\bibitem{Lyons-1990}
R.~Lyons, \href{http://dx.doi.org/10.1214/aop/1176990730}{Random walks and
  percolation on trees}, Ann. Probab. 18~(3) (1990) 931--958.
\newblock \href {http://dx.doi.org/10.1214/aop/1176990730}
  {\path{doi:10.1214/aop/1176990730}}.
\newline\urlprefix\url{http://dx.doi.org/10.1214/aop/1176990730}

\bibitem{Angel-Friedman-Hoory-2015}
O.~Angel, J.~Friedman, S.~Hoory, The non-backtracking spectrum of the universal
  cover of a graph, Transactions of the {A}merican {M}athematical {S}ociety 367
  (2015) 4287--4318.
\newblock \href {http://arxiv.org/abs/arXiv:0712.0192}
  {\path{arXiv:arXiv:0712.0192}}, \href
  {http://dx.doi.org/10.1090/S0002-9947-2014-06255-7}
  {\path{doi:10.1090/S0002-9947-2014-06255-7}}.

\bibitem{Hamilton-Pryadko-SIAM-2015}
K.~E. Hamilton, L.~P. Pryadko, Spectral bounds for percolation on directed and
  undirected graphs, 2015 {SIAM} workshop on Network Theory (2015).
\newblock \href {http://arxiv.org/abs/arXiv:1503.00410}
  {\path{arXiv:arXiv:1503.00410}}.

\bibitem{Delfosse-Zemor-2014}
N.~Delfosse, G.~Z\'{e}mor, A homological upper bound on critical probabilities
  for hyperbolic percolation, unpublished (2014).
\newblock \href {http://arxiv.org/abs/1408.4031} {\path{arXiv:1408.4031}}.

\bibitem{Higuchi-Shirai-2003}
Y.~Higuchi, T.~Shirai, Isoperimetric constants of (d,f)-regular planar graphs,
  Interdisciplinary Information Sciences 9~(2) (2003) 221--228.
\newblock \href {http://dx.doi.org/10.4036/iis.2003.221}
  {\path{doi:10.4036/iis.2003.221}}.

\bibitem{Madras-Wu-2005}
N.~Madras, C.~C. Wu,
  \href{http://journals.cambridge.org/article_S0963548305006772}{Self-avoiding
  walks on hyperbolic graphs}, Combinatorics, Probability and Computing 14
  (2005) 523--548.
\newblock \href {http://dx.doi.org/10.1017/S0963548305006772}
  {\path{doi:10.1017/S0963548305006772}}.
\newline\urlprefix\url{http://journals.cambridge.org/article_S0963548305006772}

\bibitem{Radicchi-Castellano-2016}
F.~Radicchi, C.~Castellano, Beyond the locally tree-like approximation for
  percolation on real networks, [unpublished] (2016).
\newblock \href {http://arxiv.org/abs/1602.07140} {\path{arXiv:1602.07140}}.

\bibitem{Gordon-1962}
M.~Gordon, Good's theory of cascade processes applied to the statistics of
  polymer distributions, Proceedings of the Royal Society of London, Series A:
  Mathematical and Physical Sciences 268 (1962) 240.

\bibitem{Molloy-Reed-1995}
M.~Molloy, B.~Reed, A critical point for random graphs with a given degree
  sequence, Random Structures and Algorithms 6 (1995) 161--180.
\newblock \href {http://dx.doi.org/10.1002/rsa.3240060204}
  {\path{doi:10.1002/rsa.3240060204}}.

\bibitem{Kryven-2016}
I.~Kryven, \href{http://link.aps.org/doi/10.1103/PhysRevE.94.012315}{Emergence
  of the giant weak component in directed random graphs with arbitrary degree
  distributions}, Phys. Rev. E 94 (2016) 012315.
\newblock \href {http://dx.doi.org/10.1103/PhysRevE.94.012315}
  {\path{doi:10.1103/PhysRevE.94.012315}}.
\newline\urlprefix\url{http://link.aps.org/doi/10.1103/PhysRevE.94.012315}

\end{thebibliography}
\end{document}